\renewcommand{\mathcal}[1]{\mathscr{#1}}
\newcommand{\vertex}{\textsc{VertexCover}}
\newcommand{\tropdim}{\textsc{TropDim}}
\newcommand{\tropsolv}{\textsc{TropSolv}}
\newcommand{\tropimpl}{\textsc{TropImpl}}
\newcommand{\tropequiv}{\textsc{TropEquiv}}
\newcommand{\MAP}{\textsc{MAP}}
\newcommand{\ta}{\oplus}
\newcommand{\tp}{\odot}
\newcommand{\bb}[1]{\mathbb{#1}}
\title{Complexity of tropical and min-plus\\ linear prevarieties.}
\author{Dima Grigoriev$^1$, Vladimir V. Podolskii$^2$\\[3pt]
$^1$\small CNRS, Math\'ematiques, Universit\'e de Lille, France\\
\small \href{mailto:Dmitry.Grigoryev@math.univ-lille1.fr}{Dmitry.Grigoryev@math.univ-lille1.fr}\\
$^2$ \small Steklov Mathematical Institute, Moscow, Russia\\
 \small   \href{mailto:podolskii@mi.ras.ru}{podolskii@mi.ras.ru}
}
\date{}
\begin{document}

\maketitle

\begin{abstract}
A tropical (or min-plus) semiring is a set $\mathbb{Z}$ (or $\mathbb{Z \cup \{\infty\}}$)
endowed with two operations: $\oplus$, which is just usual minimum, and $\odot$, which is usual addition.
In tropical algebra the vector $x$ is a solution to a polynomial $g_1(x) \oplus g_2(x) \oplus \ldots \oplus g_k(x)$,
where $g_i(x)$'s are tropical monomials, if the minimum in $\min_i(g_{i}(x))$ is attained at least twice.
In min-plus algebra solutions of systems of equations of the form $g_1(x)\oplus \ldots \oplus g_k(x) = h_1(x)\oplus \ldots \oplus h_l(x)$
are studied.

In this paper we consider computational problems related to tropical linear system.
We show that the solvability problem (both over $\mathbb{Z}$ and $\mathbb{Z} \cup \{\infty\}$)
and the problem of deciding the equivalence of two linear systems
(both over $\mathbb{Z}$ and $\mathbb{Z} \cup \{\infty\}$) are equivalent under polynomial-time reduction to mean payoff games
and are also equivalent to analogous problems in min-plus algebra.
In particular, all these problems belong to $\mathsf{NP} \cap \mathsf{coNP}$.
Thus we provide a tight connection of computational aspects of tropical linear algebra with mean payoff games and min-plus linear algebra.
On the other hand we show that computing the dimension of the solution space of a
tropical linear system and of a min-plus linear system are $\mathsf{NP}$-complete.

We also extend some of our results to the systems of min-plus linear inequalities.
\end{abstract}

\section{Introduction}
\label{sec.intro}

A \emph{min-plus} or \emph{tropical semiring} is defined by the set $K$ endowed with two operations
$\ta$ and $\tp$. For $K$ we can take $\mathbb{Z}$, $\mathbb{R}$, $\mathbb{Z} \cup \{+\infty\}$, $\mathbb{R} \cup \{+\infty\}$ and so on.
In this paper we mainly consider the cases of $\bb{Z}$ and $\bb{Z}_{\infty} = \bb{Z} \cup \{+\infty\}$.
Our results also extend to the cases of $\bb{Q}$ and $\bb{Q}_{\infty} = \bb{Q} \cup \{\infty\}$.
The operations \emph{tropical addition} $\ta$ and \emph{tropical multiplication} $\tp$ are defined in
the following way:
$$
x \ta y = \min\{x,y\}, \ \ \ \ x \tp y = x + y.
$$

By the \emph{tropical linear system} associated with a matrix $A \in K^{m \times n}$
we call the system of expressions
\begin{equation} \label{eq.tropical}
\min_{1 \leq j \leq n} \{a_{ij} + x_{j}\},\ 1 \leq i \leq m,
\end{equation}
or to state it the other way the vector $A \tp x$ for $x = (x_{1}, \ldots, x_{n})$.
We say that $x \neq (\infty, \ldots, \infty)$ is a solution to the system~\eqref{eq.tropical} if for every row $1 \leq i \leq m$
there are two columns $1 \leq k < l \leq n$ such that
$$
a_{ik} + x_{k} = a_{il} + x_{l} = \min_{1 \leq j \leq n} \{a_{ij} + x_{j}\}.
$$
Following the notation of~\cite{RGST05first_steps} we call the set of solutions of the tropical linear system by the
\emph{tropical linear prevariety}.
It follows from the analysis of~\cite{RGST05first_steps} that this set
is a union of polyhedrals of possibly different dimensions (this is one of the reasons for using pre- in ``prevariety'').
We call by \emph{the dimension} of the tropical prevariety the largest dimension of the
polyhedron contained in it.
%In general, one considers tropical prevarieties given by systems of
%tropical polynomials. In~\cite{theobald06frontiers} it is shown that the problem of
%solvability of a system of tropical polynomials is NP-complete. On the
%contrary, in the present paper we deal with tropical {\it linear}
%systems.

By the \emph{(two sided) min-plus linear system} associated with a pair of matrices $A, B \in K^{m \times n}$
we call the system
\begin{equation} \label{eq.min-plus}
\min_{1 \leq j \leq n} \{a_{ij} + x_{j}\} = \min_{1 \leq j \leq n} \{b_{ij} + x_{j}\},\ 1 \leq i \leq m.
\end{equation}

By the \emph{(two sided) min-plus linear system of inequalities} associated with a pair of matrices $A, B \in K^{m \times n}$
we call the system
\begin{equation} \label{eq.min-plus-ineq}
\min_{1 \leq j \leq n} \{a_{ij} + x_{j}\} \leq \min_{1 \leq j \leq n} \{b_{ij} + x_{j}\},\ 1 \leq i \leq m.
\end{equation}

We note that for all systems we consider it is not essential which of the function to use $\min$ or $\max$.
The whole theory remains the same.
%Indeed, to convert, say, tropical linear system with $\min$ operation into the system with $\max$ operation
%first multiply all equations by $(-1)$ and then substitute for all variables $x \mapsto (-x)$.
%Clearly this transformation preserves interesting properties of systems such as solvability, solution space structure,
%equivalence of solution spaces of two systems etc.

The two branches of algebra related to $(\min, +)$ structure --- tropical algebra and min-plus algebra ---
have different origins. Tropical algebra had arisen in algebraic geometry (see surveys~\cite{itenberg07geometry,sturmfels02equations})
and min-plus algebra had arisen in scheduling theory (see recent monograph~\cite{butkovic10systems}).
Thus the theories in these two branches are different and develop mostly in parallel.
Concerning the computational aspects of these algebras, the most basic question is linear algebra area.
In the case of classical algebra one of the most known Gauss algorithm solves linear systems in polynomial time.
In the case of tropical semiring things turn out to be more complicated and no polynomial time algorithm is known
neither for tropical linear systems, nor for min-plus linear systems.
%In this paper we are mainly interested in computational problems related to tropical linear systems.
%The most basic problem is the solvability problem: given a matrix $A \in K^{m \times n}$
%is the corresponding tropical linear system solvable?
%In the classical case this problem is solvable in polynomial time via Gauss algorithm.
%In the tropical case the problem becomes more complicated.
For the tropical case it is known however that the problem is in $\NP \cap \coNP$,
there are also pseudopolynomial algorithms~\cite{grigoriev10system,akian12mean_payoff}
and also it is known that the problem reduces to the well known and long standing problem
\emph{mean payoff games}~\cite{akian12mean_payoff} (see Section~\ref{sec.prelim} for the definition).
Concerning the algorithms, Grigoriev~\cite{grigoriev10system} has constructed an algorithm which is
pseudopolynomial and at the same time polynomial for constant size matrices, that is
at the same time its running time is bounded by $\poly(m,n) M \log M$ and $\poly(2^{nm},\log M)$, where $n$ is the number of columns,
$m$ is the number of rows, and $M$ is the largest absolute value of matrix entries.
Concerning the dependence on $n$ and $m$ in the second upper bound the best known upper bound
is roughly ${m+n} \choose {n}$ which
was proven by Davydov~\cite{davydov12}.
% and Razborov~\cite{razborov11personal}.
It was also shown in~\cite{davydov12} that
this is tight upper bound for Grigoriev's algorithm.

More is known about the solvability problem for min-plus linear systems.
In addition to containment in $\NP \cap \coNP$ and pseudopolynomial algorithms, as for tropical systems,
it was proven by Bezem et al.~\cite{Bezem10IPL} that the problem is polynomial-time equivalent to mean payoff games.

One more complexity aspect of min-plus algebra related to our consideration is the solvability problem
of min-plus systems of linear inequalities. In the classical case the corresponding problem is essentially linear programming
which was known for some time to be in $\NP \cap \coNP$ and was proven finally to be in $\P$~\cite{khachijan79}.
Thus intuitively the corresponding problem in min-plus algebra seems to be harder than solving systems of linear min-plus equations
(and one can see that inequalities are formally not harder than equalities in min-plus linear algebra).
For systems of min-plus linear inequalities it is also known that the solvability problem is equivalent to mean payoff games~\cite{akian12mean_payoff}.

The first result of our paper is that the solvability problem for tropical linear systems
is also equivalent to mean payoff games.
Thus on one hand we characterize the complexity of solvability problem of tropical linear systems
and on the other hand give a new reformulation of mean payoff games.
In particular, our result means that the solvability problem for mean payoff games
is equivalent to the solvability problem for min-plus systems. Thus we establish the tight connection
between two branches of algebra over operations $\min$ and $+$.
Also from our reduction the translation of Grigoriev's algorithm to mean payoff games follows.
We are not aware of a ``natural'' algorithm for mean payoff games with the same properties
(of course one can always obtain an ``unnatural'' algorithm from one with the first bound and the other with the second bound performing them in parallel).
This indicates that this translated algorithm might be essentially different from known algorithms for mean payoff games.

Next we study other problems related to tropical linear systems:
the problem of equivalence of two given tropical linear systems and
the problem of computing the dimension of the tropical prevariety.
%The result on equivalence of solvability of tropical linear systems and mean payoff games
%gives us as corollary\nb{reformulate it after the main part is ready. The phrasing should be smoother}
%that several other problems related to tropical linear systems are equivalent
%to mean payoff games and are equivalent to each other.
%In this way we additionally obtain that
%\begin{itemize}
%\item solvability problems for the case $\bb{Z}$ and for the case of $\bb{Z}_{\infty}$
%are equivalent;
%\item the problem of equivalence of two tropical linear systems is equivalent to the
%mean payoff games problem (both for $\bb{Z}$ and $\bb{Z}_{\infty}$).
%\end{itemize}
%Thus all the listed problems are equivalent.
The former problem turns out to be also equivalent to mean payoff games.
The analogous statement for min-plus linear systems is also true and follows from known result (see Lemma~\ref{lemma.min-plus} below).

Interestingly, the dimension problem of the tropical prevariety
turns out to be $\NP$-complete.
More precisely we prove $\NP$-completeness of the following problem: given an $m \times n$ matrix $A$ and the number $k$
decide whether the dimension of the tropical prevariety of the tropical linear system corresponding to $A$
is at least $k$.
%We show that unlike the problems listed above this problem is $\NP$-complete (both for $\bb{Z}$ and $\bb{Z}_{\infty}$).
We also prove the analogous result for the case of min-plus linear systems and min-plus systems of inequalities.
%The same result is true for min-plus linear systems\nb{Is it known?}.
%As a consequence we obtain that computing the dimension of min-plus cones given by a system of min-plus inequalities
%is $\NP$-hard\nb{Is it known? Can we have a completeness?}.

All results above we prove for both $\bb{Z}$ and $\bb{Z}_{\infty}$ domains (there is no obvious translation between these two cases).

The techniques of our proofs are mostly combinatorial. For equivalence of solvability problem to mean payoff games
we use the result of~\cite{MSS04scheduling} in which the equivalence of mean payoff games to \emph{max atom problem} ($\MAP$ for short)
was shown (see Section~\ref{sec.prelim} for definitions). This result was already used in~\cite{Bezem10IPL} to show that the solvability problem
of min-plus linear systems is equivalent to mean payoff games.
It was shown there that solvability problem for min-plus linear systems is equivalent to $\MAP$.
For our result we show that solvability problem for tropical linear systems is equivalent to $\MAP$.
The main difficulty here is that $\MAP$ is easier to use for studying min-plus structures than for the tropical ones.
%However we give combinatorial constructions relating max atom problem and tropical linear systems.\nb{add info on direct proofs}
From equivalence of solvability problem to mean payoff games some equivalences between purely tropical
computational problems follows. We also give direct combinatorial proof not referring to mean payoff games
of reductions between these problems.
For dimension problem of tropical linear systems we give a reduction from the vertex cover problem.
The main technical ingredient here is a combinatorial characterization of the dimension of the tropical prevariety
of given tropical linear system.

%Next we consider the problem of equivalence of two tropical prevarieties:
%given two tropical linear systems $A$ of size $m_1 \times n$ and $B$ of size $m_2 \times n$
%decide whether these systems are equivalent.
%We show that this problem is equivalent under Turing polynomial reductions to the problem of deciding tropical linear system.
%It follows that this problem is also in $\NP \cap \coNP$ and if one of the problem turns out to be in $\P$ the other is also in $\P$.

The rest of the paper is organized as follows.
In Section~\ref{sec.prelim} we state the facts we need on the tropical linear systems.
In Section~\ref{sec.solvability} we prove the result on equivalence of solvability problem
for tropical linear systems and of mean payoff games.
In Section~\ref{sec.rank} we discuss the relation between the dimension of the solution space of the tropical linear system
and the known notions of tropical rank.
In Sections~\ref{sec.characterization} and~\ref{sec.completeness} we prove $\NP$ completeness of the dimension of the tropical prevariety:
in the former we give a combinatorial characterization of the dimension and in the latter we use it to prove $\NP$-completeness.
%In Section~\ref{sec.equivalence} we prove that the equivalence problem of tropical prevarieties is equivalent to the solvability problem
%of the tropical linear system.

\section{Preliminaries}
\label{sec.prelim}

Throughout the paper for an integer $n$ we denote by $[n]$ the set $\{1, 2, \ldots, n\}$.
By $\leq_m$ we denote polynomial time many to one reduction.
By $\leq_T$ we denote polynomial time Turing reduction.

\subsection{Mean payoff games}

In an instance of mean payoff game we are given a directed bipartite graph $G = (V, E)$, whose
vertices are divided into two disjoint sets $V =  V_{1} \sqcup V_{2}$,
some fixed initial node $v \in V_{1}$ and a function $w \colon E \to \bb{Z}$ assigning weights to
the edges of $G$. In the beginning of the game a token is placed to the initial vertex $v$. On each turn one
of the players moves the token to some other node of the graph. Each turn of the
game is organized as follows. If the token is currently in some node $u \in V_{1}$ then the first player can move it to any
node $w$ such that $(u,w) \in E$. If, on the other hand, $u \in V_{2}$ then the second player can move the token to any node
$w$ such that $(u,w) \in E$. The game is infinite and the process of the game can be described by the sequence of nodes
$v_{0}, v_{1}, v_{2}, \ldots$ which the token visits. Note that $v_0 = v$. The first player wins the game if
\begin{equation} \label{eq.mpg}
\liminf_{n \to \infty} \frac 1t \sum_{i=1}^{t} w(v_{i-1}, v_{i}) > 0.
\end{equation}
The corresponding mean payoff game problem is to decide whether the first player has winning strategy.

For more information on mean payoff games see survey~\cite{klauck02parity_games}.
It is known that both of the players have optimal positional strategy,
that is strategies depending only on the current position of the token and not on the history.
From this in particular it follows that the optimal value of the game (the largest left-hand side of~\eqref{eq.mpg} that the first player can achieve)
is a rational number with the denominator polynomial in the number of vertices of $G$.

Also it is clear that the instance of negated mean payoff game problem, that is
the problem whether the second player has a winning strategy, is polynomial time m-reducible to mean payoff games.
Indeed, just change the roles of the players and add the new initial vertex $v^\prime$ with no ingoing edges and one outgoing
edge $(v^{\prime}, v)$ to pass the move to the second player. The problem that the value of the game might be zero can be handled
by changing all weights by small rational number (after that the value of the game is always nonzero) and multiplying them
by the denominator to make them integer.

During our reductions sometimes we will be in the situations when we reduce some problem to
solution of several instances of another problem equivalent to mean payoff games, that is
the input to the original problem will be `yes' instance iff all inputs constructed during the reduction
are `yes' inputs of the problem equivalent to mean payoff games. In this case we can actually substitute several inputs by one
since we can do this for mean payoff games. Indeed, we can just consider the graph consisting of unconnected copies
of all graphs corresponding to several inputs we have, add the node belonging to the second player
from which he can reach all starting nodes of all subgraphs and add one more node to pass the first move to the first player.

\subsection{Tropical and min-plus linear systems}

Consider arbitrary tropical linear system~\eqref{eq.tropical}.
Note that its tropical prevariety $S$ is closed under tropical scalar multiplication,
or, to state it the other way, $S = S + \mathbb{Z} \vec{1}$, where by $\vec{1}$ we denote the vector of all ones.
Thus we can consider the set of solutions of~\eqref{eq.tropical} as a set in the tropical projective space $\mathbb{TP}^{n-1} = \mathbb{R}^n / \langle \vec{1} \rangle_{\mathbb{R}}$. In this paper we will alternatively consider the solution prevariety in the spaces $\bb{R}^n$ and $\bb{TR}^{n-1}$
depending on which one is more convenient in the current argument.

Consider some matrix $A \in \bb{Z}^{m \times n}$.
Note that adding some number to all entries of some row of $A$ does not change the tropical prevariety
of system~\eqref{eq.tropical}.
Thus in the course of the proofs we can freely add and subtract some number from some row of the matrix under consideration.

Let us add the same vector $\vec{v} \in \bb{Z}^n$ to all rows of $A$ and denote the resulting matrix by $A_{\vec{v}}$.
Then we have that the tropical prevariety of $A_{\vec{v}}$ is a linear translation
of the tropical prevariety of $A$. Since many important properties survive after translations we will apply this kind of transformation
to matrices.

Finally, let us multiply all entries of the matrix by the same constant $c \in \bb{N}$.
Note that all vectors in the tropical prevariety also multiplies by the same constant.
Sometimes we will perform this operation also.
In particular,
this observation implies that all our results are also true for the domains $\bb{Q}$ and $\bb{Q} \cup \{\infty\}$.

All observations above are also true for min-plus systems of equalities and inequalities.

Consider a tropical linear system with the matrix $A \in \bb{Z}^{m \times n}$ and assume that
$a_{ij} \geq 0$ for all $i \in [m],\ j \in [n]$ (we can reduce any matrix to such form adding vectors $c \cdot \vec{1}$ to the rows).
Assume that the entries of the matrix are bounded by some value $M$, that is $a_{ij} \leq M$.

The following lemma bounding the size of the smallest solution was proven in~\cite{grigoriev10system}.

\begin{lemma} [\cite{grigoriev10system}]
If the system has a solution $(x_{1}, \ldots, x_{n})$, then it has a solution $(x^{\prime}_{1}, \ldots, x^{\prime}_{n})$
satisfying $0 \leq x^{\prime}_{j} \leq M$
%$x^{\prime}_{j} \leq x_{j}$
for all $1 \leq j \leq n$.
\end{lemma}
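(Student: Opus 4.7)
The plan is to reduce to a normalized solution (by tropical scalar multiplication) and then to truncate its large entries down to $M$, showing that truncation preserves the solution property.

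First I would normalize: since the prevariety is closed under tropical scalar multiplication ($S = S + \mathbb{Z}\vec{1}$, as noted just above in the preliminaries), I can replace $(x_1,\ldots,x_n)$ by $(x_1+c,\ldots,x_n+c)$ for the constant $c = -\min\{x_j : x_j < \infty\}$. Since $x$ is not the all-$\infty$ vector, this minimum is well-defined and finite, so after the shift one has $\min_j x_j = 0$ (where $x_j=\infty$ stays $\infty$, and at least one coordinate equals $0$).

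Next I would establish two simple bounds on each row $i$. Pick $j^*$ with $x_{j^*} = 0$; then the row-minimum satisfies
\[
m_i := \min_{1\leq j\leq n} \{a_{ij}+x_j\} \leq a_{ij^*} + 0 \leq M.
\]
On the other hand, for any column $k$ with $x_k > M$ (including $x_k=\infty$) we have $a_{ik}+x_k \geq 0 + x_k > M \geq m_i$, so such a column cannot attain the row-minimum. In particular, the two columns guaranteed by the definition of a tropical solution must satisfy $x_k \leq M$ in row $i$.

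Now define the truncated vector $x'_j := \min(x_j, M)$ (with the convention $\min(\infty,M)=M$). I would verify that $x'$ is again a solution: in each row $i$, the two indices attaining $m_i$ are unchanged (their $x$-values are already $\leq M$), and for every other column $k$ we still have
\[
a_{ik} + x'_k \geq 0 + \min(x_k, M) \geq \min(a_{ik}+x_k,\, M) \geq m_i,
\]
using $a_{ik}\geq 0$ in the first step and $m_i \leq M$ in the last. Hence the minimum in row $i$ is still exactly $m_i$ and is attained at (at least) the same two columns. Finally $x'$ has all entries in $[0,M]$ and is therefore not the all-$\infty$ vector, completing the proof.

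The only mild subtlety is the handling of $\infty$-entries of $x$ in the $\bb{Z}_\infty$ setting, but the min-truncation convention takes care of this uniformly, so I don't foresee a real obstacle; the argument is essentially a one-line observation once the normalization $\min_j x_j=0$ is made.
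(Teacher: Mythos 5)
The paper does not prove this lemma at all --- it is imported verbatim from \cite{grigoriev10system} --- so there is no internal argument to compare against; your proposal supplies a short self-contained proof, and its overall strategy (normalize so that $\min_j x_j=0$, observe $m_i\leq M$ and that no column with $x_k>M$ can attain the row minimum, then truncate at $M$) is sound and does establish the statement. One step, however, is stated incorrectly: in the displayed chain the middle inequality $\min(x_k,M)\geq\min(a_{ik}+x_k,M)$ is false in general (take $a_{ik}=3$, $x_k=5$, $M=10$: the left side is $5$ and the right side is $8$); it would require $a_{ik}\leq 0$, the opposite of the hypothesis you invoke. The conclusion you want is nevertheless true, and the repair is one line: write
$a_{ik}+x'_k=a_{ik}+\min(x_k,M)=\min(a_{ik}+x_k,\,a_{ik}+M)\geq\min(m_i,\,M)=m_i$,
using $a_{ik}+x_k\geq m_i$ for the first entry, $a_{ik}\geq 0$ for the second, and $m_i\leq M$ at the end. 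With that substitution the argument is complete; the treatment of the two minimizing columns (whose coordinates are already $\leq M$ and hence untouched by truncation) and of the $\infty$-entries is fine.
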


It is known that the solution space of $A$ is the system of polytopes of possibly different dimension~\cite{RGST05first_steps}.
It is also known that the solution space is connected (see~\cite{theobald06frontiers}, Lemma 4.12).
%It is also not hard to see that the coordinates of the vertices of the polytopes are bounded by $Mn$\nb{The proof is in email from 02.03.2012,
%but it seems we do not need this statement currently.}.

In this paper we consider the following problems.

\begin{itemize}
  \item $\tropsolv$. In this problem we are given an integer matrix $A \in \bb{Z}^{m \times n}$.
  The problem is to decide whether the corresponding tropical system~\eqref{eq.tropical} is solvable.
  %\item $\tropsolv_{\infty}$. In this problem we are given a matrix $A \in \bb{Z}_{\infty}^{m \times n}$ and
  %the problem is to decide whether the corresponding tropical system over $\bb{Z}_{\infty}$ is solvable.
  \item $\tropequiv$. In this problem we are given two integer matrices $A \in \bb{Z}^{m \times n}$ and $B \in \bb{Z}^{k \times n}$.
  The problem is to decide whether the corresponding tropical systems~\eqref{eq.tropical} over the same set of variables are equivalent.
  %\item $\tropequiv_{\infty}$. In this problem we are given two integer matrices $A \in \bb{Z}_{\infty}^{m \times n}$ and $B \in \bb{Z}_{\infty}^{k \times n}$.
  %The problem is to decide whether the corresponding tropical systems~\eqref{eq.tropical} over the same set of variables are equivalent.
  \item $\tropimpl$. In this problem we are given an integer matrix $A \in \bb{Z}^{m \times n}$ and a vector $l \in \bb{Z}^{n}$.
  The problem is to decide whether the tropical system~\eqref{eq.tropical} corresponding to $A$ implies the tropical equality corresponding to $l$.
  \item $\tropdim$. In this problem we are given an integer matrix $A \in \bb{Z}^{m \times n}$ and a number $k \in \bb{N}$.
  The problem is to decide whether the dimension of the tropical prevariety corresponding to the tropical system~\eqref{eq.tropical} is at least $k$.
\end{itemize}
For all problems above there are also variants of them over $\bb{Z}_{\infty}$. We denote them by the subscript $\infty$,
for example in the problem $\tropsolv_{\infty}$ we are given a matrix $A \in \bb{Z}_{\infty}^{m \times n}$ and
the problem is to decide whether the corresponding tropical system over $\bb{Z}_{\infty}$ is solvable.
For local dimension of tropical prevariety (that is the dimension of the neighborhood of some point) over $\bb{Z}_{\infty}$
in a point with some infinite coordinates we consider just the dimension over finite coordinates only.

When we consider systems over $\bb{Z}_{\infty}$ we do not allow solutions consisting only of infinities.

Next we show some simple relations between $\bb{Z}$ and $\bb{Z}_{\infty}$ cases.
\begin{lemma} \label{lemma.reduct_simple}
\begin{enumerate}
  \item $\tropsolv \leq_m \tropsolv_{\infty}$;
  \item $\tropimpl \leq_m \tropimpl_{\infty}$;
  \item $\tropdim \leq \tropdim_{\infty}$.
 \end{enumerate}
\end{lemma}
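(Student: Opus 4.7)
The plan is to realize all three reductions by the identity map: given an instance over $\bb{Z}$ we reinterpret the same matrix (and accompanying data) as an instance over $\bb{Z}_{\infty}$. The only content is that the answer is preserved, and the common tool is the following elementary extension statement. Suppose $A \in \bb{Z}^{m \times n}$ has all finite entries and $x \in \bb{Z}_{\infty}^n$, $x \neq (\infty,\ldots,\infty)$, solves the tropical system associated to $A$ over $\bb{Z}_{\infty}$, with support $S = \{j : x_j < \infty\}$. Then for any $N \in \bb{Z}$ larger than $\max_{i,j}|a_{ij}| + \max_{j \in S} x_j$ the vector $x^{\prime}$ defined by $x^{\prime}_j = x_j$ for $j \in S$ and $x^{\prime}_j = N$ otherwise is a $\bb{Z}$-solution of the same system. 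The proof is immediate: the quantities $a_{ij} + x_j = +\infty$ for $j \notin S$ cannot witness the minimum of row $i$, so both witnesses must lie in $S$; making $N$ this large keeps the extra entries $a_{ij} + N$ strictly above the row minimum after the extension.

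With the lemma in hand, $\tropsolv \leq_m \tropsolv_\infty$ is immediate: solvability over $\bb{Z}$ trivially implies solvability over $\bb{Z}_\infty$, and the converse is the extension lemma. For $\tropimpl \leq_m \tropimpl_\infty$ the backward implication is again trivial. For the forward direction, if $A$ implies $l$ over $\bb{Z}$ and $x$ is a $\bb{Z}_\infty$-solution of $A$ with support $S$, the extended vector $x^{\prime}$ is a $\bb{Z}$-solution and therefore $\min_j(l_j + x^{\prime}_j)$ is attained at least twice. Taking $N$ also above $\max_j|l_j| + \max_{j \in S} x_j$ forces both witnesses into $S$; since $l_j + x_j = +\infty$ for $j \notin S$, the same pair of indices witnesses the $l$-minimum for $x$ itself, so $x$ satisfies the $l$-equality over $\bb{Z}_\infty$.

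For $\tropdim \leq \tropdim_\infty$ we again use the identity and prove $\dim V = \dim V_\infty$, where $V$ and $V_\infty$ denote the two prevarieties. The inequality $\dim V \leq \dim V_\infty$ is trivial from $V \subseteq V_\infty$. For the reverse, let $P \subseteq V_\infty$ be a polytope of dimension $d$; the convention in the excerpt that infinite coordinates are discarded when computing local dimension forces every point of $P$ to share a common support $S$, with $P$ a $d$-dimensional polytope inside $\bb{R}^S$. A single $N$ large enough (uniformly for the bounded polytope $P$) makes the extension lemma apply pointwise, and perturbing each outside coordinate in a small interval $[N - \epsilon, N + \epsilon]$ still preserves the strict gap between the row minimum and the extra entries $a_{ij} + N$. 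Hence the Cartesian product $P \times [N - \epsilon, N + \epsilon]^{n - |S|}$ is a polytope in $V$ of dimension $d + (n - |S|) \geq d$, yielding $\dim V \geq \dim V_\infty$.

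The only point that needs care is the uniformity of $N$ in the $\tropdim$ case: one must pick a single constant that serves the extension simultaneously for every point of $P$ and under the small perturbations in the coordinates outside $S$. This is harmless, since the required lower bound on $N$ depends only on $\max|a_{ij}|$ and the diameter of $P$ in $\bb{R}^S$, both of which are finite.
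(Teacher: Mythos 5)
Your proposal is correct and follows essentially the same route as the paper: all three reductions are the identity map, justified by replacing the infinite coordinates of a $\bb{Z}_{\infty}$-solution with sufficiently large finite numbers. The only caveat is in the $\tropdim$ part, where the polytope $P$ of maximal dimension may be unbounded, so the ``diameter of $P$'' need not be finite; one should first pass to a bounded sub-polytope of the same dimension before choosing the uniform constant $N$, which is a one-line fix.
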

\begin{proof}
For the first reduction, if we are given a tropical linear system with coefficients
in $\bb{Z}$ then it is solvable over $\bb{Z}$ iff it is solvable over $\bb{Z}_{\infty}$.
For the nontrivial direction, if there is a solution over $\bb{Z}_{\infty}$ in which some coordinates are infinite,
we can just substitute them by large enough finite numbers.

For the second reduction, if we are given a tropical linear system and a tropical linear equality over $\bb{Z}$,
consider them over $\bb{Z}_{\infty}$. If there was no implication over $\bb{Z}$, that is there is a solution over $\bb{Z}$ of the system,
which is not a solution of the equation, then clearly the same is is true over $\bb{Z}_{\infty}$, and there is also no implication.
If there is no implication over $\bb{Z}_{\infty}$ then there is a solution over $\bb{Z}_{\infty}$ of the system, which is
not a solution of the equation. Substituting infinities in the solution by large enough constants we get that there is
also no implication over $\bb{Z}$.

For the last reduction, again if we have a tropical linear system with coefficients in $\bb{Z}$
and we have some solution with infinite coordinates then if we substitute infinities by large enough finite numbers,
the local dimension at this point does not decrease.
\end{proof}

%\begin{definition}
%In the $\tropsolv$ problem we are given an integer matrix $A \in \bb{Z}^{m \times n}$.
%The problem is to decide whether the corresponding tropical system~\eqref{eq.tropical} is solvable.
%In the problem $\tropsolv_{\infty}$ we are given a matrix $A \in \bb{Z}_{\infty}^{m \times n}$ and,
%again, the problem is to decide whether the corresponding tropical system over $\bb{Z}_{\infty}$ is solvable.
%\end{definition}

\subsection{Max-atom problem}
\label{subsec.max-atom}

For the proof of our first result we need an intermediate
\emph{max-atom problem} or $\MAP$. This problem consists of solving
a system of inequalities of the form
\begin{equation} \label{eq.map}
\max\{x,y\} + k \geq z
\end{equation}
over $\bb{Z}$ where $k$ is also an integer.
%We will also consider $\MAP_{\infty}$ problem. This problems is defined in the same way as $\MAP$
%but now variables can attain values in $\bb{Z}_{\infty}$ and constants in~\eqref{eq.map} are also from $\bb{Z}_{\infty}$.

%The problem $\MAP$ was considered by Bezem et al.~\cite{Bezem10IPL}
%and was used there to prove that the solvability problem for min-plus
%linear system is equivalent to mean payoff games.
%The problem essentially the same as $\MAP$
%in slightly different form was also considered by M\"{o}hring et al.~\cite{MSS04scheduling}
%in different context.
%In fact, it was essentially proven in~\cite{MSS04scheduling} that $\MAP$ is polynomial time equivalent to mean payoff games
%and it was proven in~\cite{Bezem10IPL} that $\MAP$ is equivalent to solvability of min-plus linear systems.
%From these two results the equivalence of solvability of min-plus linear systems and mean payoff games follows.
%To show that solvability problem for tropical systems is equivalent to mean payoff games we actually prove that
%it is equivalent to $\MAP$.

\section{Solving tropical systems is equivalent to mean payoff games}
\label{sec.solvability}

In this section we prove that solvability problem for tropical linear systems is equivalent to mean payoff games.
For this we show that $\tropsolv$ is equivalent to $\MAP$.
First we prove the following simple lemma.

\begin{lemma} \label{lemma.inequalities}
$\tropsolv$ reduces in polynomial time to the solvability problem for the system of min-plus inequalities.
Moreover, for given tropical linear system we can effectively construct the system of min-plus inequalities
over the same set of variables and with the same set of solutions.
The same is true for the domain $\bb{Z}_{\infty}$.
\end{lemma}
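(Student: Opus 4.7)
The plan is to turn each tropical equation ``the minimum in row $i$ is attained at least twice'' into $n$ min-plus inequalities, one for each column, and show the two conditions are logically equivalent at the level of individual points. Concretely, fix a row $i$ and consider the terms $t_{ij}(x) = a_{ij} + x_j$. I claim that the minimum among $t_{i1}(x), \ldots, t_{in}(x)$ is attained at least twice if and only if for every column $k \in [n]$,
\begin{equation*}
\min_{j \neq k}\{a_{ij} + x_j\} \;\leq\; a_{ik} + x_k.
\end{equation*}
Each such inequality is a min-plus inequality in the sense of~\eqref{eq.min-plus-ineq}, with $n-1$ terms on the left and a single (degenerate min over one element) term on the right.

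For the equivalence, the forward direction is immediate: if the minimum is attained at indices in some set $S$ with $|S| \geq 2$, then for any column $k$ the set $S \setminus \{k\}$ is nonempty, so $\min_{j \neq k}\{a_{ij} + x_j\}$ equals the overall minimum, which is $\leq a_{ik} + x_k$. Conversely, suppose the minimum is attained uniquely, at some index $k_0$. Then removing column $k_0$ strictly increases the minimum, so $\min_{j \neq k_0}\{a_{ij} + x_j\} > a_{i k_0} + x_{k_0}$, violating the inequality for $k = k_0$.

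The reduction is then to replace each row $i$ of the tropical system by these $n$ min-plus inequalities, producing $mn$ inequalities total in the same variables $x_1, \ldots, x_n$. By the equivalence above, a vector $x$ is a solution of the tropical system if and only if it is a solution of the resulting min-plus system of inequalities. The construction is clearly polynomial-time computable, and the variable set is unchanged.

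For the $\bb{Z}_\infty$ case, the argument is identical once we adopt the standard conventions $\infty + c = \infty$ and $\min(\infty, a) = a$: the equivalence ``minimum attained at least twice iff no single term is strictly smaller than all others'' does not rely on finiteness, and the prohibition of the all-$\infty$ solution is carried over verbatim from the tropical to the min-plus setting. I expect no serious obstacle here; the only thing to verify carefully is that a ``min of one element'' on the right-hand side is a legitimate instance of~\eqref{eq.min-plus-ineq}, which it is (one may formally pad it with a dummy copy of the same term, or interpret~\eqref{eq.min-plus-ineq} as allowing fewer columns on each side).
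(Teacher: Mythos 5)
Your combinatorial equivalence --- ``the minimum in row $i$ is attained at least twice iff for every $k\in[n]$ one has $\min_{j\neq k}\{a_{ij}+x_j\}\leq a_{ik}+x_k$'' --- is exactly the first step of the paper's proof, and your argument for it is correct, including over $\bb{Z}_{\infty}$. The gap is in the final step, which you flag but do not actually resolve: over $\bb{Z}$ the inequality $\min_{j\neq k}\{a_{ij}+x_j\}\leq a_{ik}+x_k$ is \emph{not} an instance of~\eqref{eq.min-plus-ineq}, because there both sides must be minima over all $n$ columns with coefficients in $K$. Neither of your proposed fixes works in the $\bb{Z}$ case: ``padding with a dummy copy of the same term'' does not produce a matrix $B\in\bb{Z}^{m\times n}$ (the right-hand minimum must range over every column exactly once), and ``interpreting~\eqref{eq.min-plus-ineq} as allowing fewer columns on each side'' changes the target problem, so you would still owe a reduction from that relaxed problem to the one actually stated. (Over $\bb{Z}_{\infty}$ your padding is fine: set the missing coefficients to $+\infty$.)

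The paper closes exactly this gap with a shifting trick, which is the only content of its proof beyond your equivalence. Writing $y_j=a_{ij}+x_j$, the one-sided inequality $\min_{j\neq k}y_j\leq y_k$ is equivalent to the genuinely two-sided inequality
\[
\min\{y_1-1,\ldots,y_{k-1}-1,\;y_k,\;y_{k+1}-1,\ldots,y_n-1\}\;\leq\;\min\{y_1,\ldots,y_{k-1},\;y_k-1,\;y_{k+1},\ldots,y_n\},
\]
as one checks by distinguishing the cases $\min_{j\neq k}y_j\leq y_k$ and $\min_{j\neq k}y_j\geq y_k+1$ (integrality of the entries is used here). Both sides now involve all $n$ variables with finite integer coefficients, so each row of $A$ yields $n$ legitimate inequalities of the form~\eqref{eq.min-plus-ineq} over the same variables and with the same solution set, which is what the lemma asserts. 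With this step added your proof is complete. Be careful with the orientation: the corresponding display in the paper has its two sides interchanged, and in that form the equivalence fails (take $n=2$, $y_1=0$, $y_2=5$, $k=1$); the version above is the correct one.
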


\begin{proof}
Let $A$ be some tropical linear system.
For each its equation we construct the system of min-plus inequalities
over the same set of variables which is \emph{equivalent} to the equation.

For this let
\begin{equation} \label{eq.row}
\min\{x_1 + a_1, x_2 + a_2, \ldots, x_{n} + a_n\}
\end{equation}
be one of the rows of the system $A$.
For notation simplicity we denote $y_i = x_i + a_i$ for $i = 1, \ldots, n$.
Then we can rewrite~\eqref{eq.row} as $\min\{y_{1}, \ldots, y_{n}\}$.

It is easy to see that the fact that the minimum in the expression above is attained at
least twice is equivalent to the fact that for any $i = 1, \ldots, n$
it is true that
\begin{equation} \label{eq.to_MAP}
\min\{y_1, ..., y_{i-1}, y_{i+1}, ..., y_{n}\} \leq y_{i}.
\end{equation}
And each of these inequalities is in turn equivalent to the inequality
\begin{align*}
& \min\{y_1, ..., y_{i-1}, y_{i} - 1, y_{i+1}, ..., y_{n}\}
\leq \\
& \min\{y_1 - 1, ..., y_{i-1} - 1, y_{i}, y_{i+1} - 1, ..., y_{n} - 1\}.
\end{align*}
The last inequality is already in min-plus form and thus we have that any tropical equality is equivalent
to the system of min-plus inequalities. To get the system of inequalities equivalent to the system of equalities
we just unite systems for all equalities of $A$.

Note that exactly the same analysis works for the case $\bb{Z}_{\infty}$.
\end{proof}

\begin{remark}
It was proven by Akian et al.~\cite{akian12mean_payoff} that the solvability problem for the systems of min-plus inequalities
(over $\bb{Z}$ and $\bb{Z}_{\infty}$) is equivalent to mean payoff
games. It was also proven there that $\tropsolv$ and $\tropsolv_{\infty}$ reduces to mean payoff games. The lemma above shows, in particular, that the latter
result follows easily from the former.
\end{remark}

As a corollary of Lemma~\ref{lemma.inequalities} we have a reduction from $\tropsolv$ to $\MAP$.
\begin{corollary} \label{cor.MAP}
$\tropsolv \leq_{m} \MAP$.
\end{corollary}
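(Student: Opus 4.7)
The plan is to chain Lemma~\ref{lemma.inequalities} with a direct encoding of min-plus inequalities into \MAP-atoms. By Lemma~\ref{lemma.inequalities}, I can replace the given tropical system in polynomial time by an equivalent system of min-plus inequalities, each of the form
\[
\min_{i \in [N]} (x_{p(i)} + c_i) \leq \min_{j \in [M]} (x_{q(j)} + d_j)
\]
with integer constants $c_i,d_j$. Splitting the right-hand side, each such inequality is equivalent to the conjunction, over $j \in [M]$, of $\min_{i} (x_{p(i)} + c_i) \leq x_{q(j)} + d_j$, so I am reduced to handling inequalities of this simpler shape.

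Next I pass from ``min'' to ``max'' form by applying the global bijective substitution $x_l \mapsto -z_l$, which preserves integer solvability. Under this change of variable each inequality becomes
\[
z_{q(j)} \leq \max_{i \in [N]} \bigl( z_{p(i)} + (d_j - c_i) \bigr),
\]
which is almost a \MAP-atom except that the max is $N$-ary. To reduce to binary \MAP-atoms I introduce, for each such inequality, fresh auxiliary variables: first $u_1, \ldots, u_N$ pinned to $u_i = z_{p(i)} + (d_j - c_i)$ via the pair of atoms $\max\{z_{p(i)}, z_{p(i)}\} + (d_j - c_i) \geq u_i$ and $\max\{u_i, u_i\} + (c_i - d_j) \geq z_{p(i)}$; then cumulative-max variables $w_1, \ldots, w_N$, with $w_1 := u_1$ and, for $k \geq 2$, $w_k$ pinned to $\max(w_{k-1}, u_k)$ via the three atoms $\max\{w_k, w_k\} \geq w_{k-1}$, $\max\{w_k, w_k\} \geq u_k$, and $\max\{w_{k-1}, u_k\} \geq w_k$; finally a single atom $\max\{w_N, w_N\} \geq z_{q(j)}$ encoding the original inequality. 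This produces $O(N)$ atoms per inequality and a polynomial-size \MAP-instance overall.

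There is no single hard step --- the argument is essentially bookkeeping. The one thing to verify is that the auxiliary encoding is correct in both directions: in any assignment satisfying the produced \MAP-atoms, the variable $u_i$ is forced to equal $z_{p(i)} + (d_j - c_i)$ and $w_k$ to equal $\max(u_1, \ldots, u_k)$, since each of these equalities is imposed by matching upper and lower bound atoms. Consequently the \MAP-instance is satisfiable over $\bb{Z}$ iff the max-plus system in the $z_l$ is solvable, iff (via the sign flip) the min-plus system in the $x_l$ is solvable, iff (by Lemma~\ref{lemma.inequalities}) the original tropical system is solvable. The main conceptual point is that \MAP-atoms naturally encode max-based disjunctions, so one must reverse signs globally before the min-plus inequalities coming out of Lemma~\ref{lemma.inequalities} become directly \MAP-expressible.
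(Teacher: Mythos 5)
Your proof is correct and follows essentially the same route as the paper: apply Lemma~\ref{lemma.inequalities}, flip signs via $x \mapsto -z$ to pass from $\min$ to $\max$, and then rewrite the resulting $N$-ary max inequalities as binary \MAP-atoms. The only difference is that you spell out explicitly (via the pinned $u_i$ and cumulative-max $w_k$ gadgets) the ``simple tricks'' that the paper delegates to Section~2 of Bezem et al., and your gadget is indeed sound in both directions.
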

\begin{proof}
Given a tropical linear system $A$ first for each equality construct the system of inequalities~\eqref{eq.to_MAP}.
Then multiply all these inequalities by $(-1)$ and make a transformation of variables $x \mapsto -x$ to switch from $\min$
to $\max$.
After that the inequalities are almost in the form of $\MAP$ and can be easily transformed to the desired form
by simple tricks described in Section~2 of~\cite{Bezem10IPL}.
\end{proof}

Now we proceed to the reduction in the reverse direction.
For this we will need the following technical lemma.

\begin{lemma} \label{lemma.stars_restriction}
Let $k \leq n$ and consider arbitrary vector $\vec{a} = (a_1, \ldots, a_{k}) \in \bb{Z}^{k}$.
Then for any $C \in \bb{Z}$ there is a tropical linear system $A \in \bb{Z}^{m \times n}$,
where $m= n-k+1$,
such that
\begin{itemize}
\item for any $i \in [m]$ and any $j \in [k]$ we have $a_{ij} = a_{j}$;
\item for any $i \in [m]$ and any $j \in [n] \setminus [k]$ we have $a_{ij} \geq C$;
\item for any solution of $A$ and for any row the minimum is attained at least twice in the $\vec{a}$-part of the row.
\end{itemize}
\end{lemma}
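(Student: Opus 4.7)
The plan is to construct $A$ explicitly. Put $p = n - k$, so that $m = p + 1$. For every $i \in [m]$ and $j \in [k]$ set $a_{ij} = a_j$, which makes the first $k$ columns of every row equal to $\vec{a}$. For the star columns $k+1, \ldots, n$, let the $p \times p$ block in rows $1, \ldots, p$ have $C$ on the diagonal and $C+1$ off the diagonal, and let row $p+1$ be identically $C+1$ across the star columns. The first two bullets of the lemma are immediate, since all star entries lie in $\{C, C+1\}$.

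Given a candidate solution $x$, I would write $v = \min_{j \in [k]}(a_j + x_j)$, $M_i = \min_{j > k}(a_{ij} + x_j)$, and $s = \min_{j \in [p]} x_{k+j}$. A direct calculation from the construction yields $M_{p+1} = C + 1 + s$; and for $i \in [p]$, $M_i = C + s$ when $x_{k+i} = s$ (with the star-minimum attained only at column $k+i$), while $M_i = C + 1 + s$ when $x_{k+i} > s$ (with the star-minimum attained at the off-diagonal argmin columns of $x$, together with column $k+i$ exactly when $x_{k+i} = s + 1$). Assuming for contradiction that some row's minimum is not attained twice in $[k]$, the failure splits into (a) $v$ attained only once in $[k]$, and (b) $v$ attained at least twice in $[k]$ but some row $i^*$ has $M_{i^*} < v$ with the star-minimum of row $i^*$ attained at least twice.

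In case (a), solvability of each row forces $M_i \leq v$ for every $i$. Choosing $j_0 \in [p]$ with $x_{k+j_0} = s$, we have $M_{j_0} = C + s$ and $M_{p+1} = M_{j_0} + 1$, so at most one of the two equals $v$. If $M_{j_0} = v$, then $M_{p+1} > v$ violates $M_{p+1} \leq v$; if $M_{j_0} < v$, then the overall minimum of row $j_0$ equals $M_{j_0}$ but is attained only at the single column $k+j_0$, contradicting solvability. In case (b), if $i^* = p+1$ the double attainment of $M_{p+1}$ requires at least two star variables to equal $s$, and for any such $j_0$ the row $j_0 \in [p]$ has $M_{j_0} = C + s = M_{p+1} - 1 < v$ attained only at $k+j_0$, a contradiction. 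If $i^* \in [p]$, then necessarily $x_{k+i^*} > s$ (else $M_{i^*}$ would be uniquely attained), so $M_{i^*} = C + 1 + s$ and the double attainment forces some $j \neq i^*$ with $x_{k+j} = s$; the row $j \in [p]$ then has $M_j = C + s = M_{i^*} - 1 < v$ attained only at $k+j$, again a contradiction.

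The main obstacle is setting up the right formulas for $M_i$ and the correct case split; once those are in hand, each case collapses into the same short ``cascading drop'' argument. The structural key is that the diagonal entry $C$ differs from off-diagonal entries by exactly $1$: whenever some row's star-minimum drops below $v$, another row's star-minimum is forced to drop by one more, and the integer-valued gap of $1$ leaves no room for a second star column to tie for the new minimum.
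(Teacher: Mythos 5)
Your construction and argument essentially coincide with the paper's: it too takes the $n-k$ rows whose star block has $C$ on the diagonal and $C+1$ off the diagonal, plus one extra row that is constantly $C+1$ on the star columns, and derives the same cascading contradiction (a star-column minimum in one row forces a strictly unique minimum in another row). The only difference is that the paper's extra row carries $\vec{a}-\vec{1}$ rather than $\vec{a}$ in the first $k$ columns, which lets it read off the final contradiction in one line (at the cost of not literally matching the lemma's first bullet); your variant keeps $\vec{a}$ there and compensates with the $v$ versus $C+s$ versus $C+1+s$ case analysis, which is correct.
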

\begin{proof}
To prove the lemma we will introduce several tropical equations and the system $A$ will be the union of them.
First consider the row corresponding to the following vector
$$
l = (\vec{a}, C+1, \ldots, C+1),
$$
where $l \in \bb{Z}^n$.
Next, for each $i = k+1, \ldots, n$ let
$$
l_i = l - e_{i} = (\vec{a}, C+1, \ldots, C, \ldots, C+1),
$$
where $e_i \in \bb{Z}^n$ is a vector with $1$ in the $i$-th coordinate and $0$ in all other coordinates,
and let
$$
l_0 = l - \sum_{i=1}^{k} e_{i} = (\vec{a} - \vec{1}, C+1, \ldots, C+1).
$$
We let $A$ be the system consisting of equalities $l_0, l_{k+1}, \ldots, l_{n}$.

Suppose, by a way of contradiction, that $A$ has a solution such that in some row $l_i$ there is at most one minimum in the $\vec{a}$-part.
This means that in this row there is a minimum in the column $j$ such that $k+1 \leq j \leq n$.
If $j \neq i$ consider the row $l_j$. It is easy to see that this row contains exactly one minimum (in the column $j$)
and this is the contradiction. Thus the minimum in the row $l_i$ outside of $\vec{a}$-part can be situated only in the column $i$
(in particular, $i \neq 0$).
But since the minimum is attained at least twice there is at least one minimum in $\vec{a}$-part of $l_i$.
Now consider the row $l_0$. Clearly the minimums of this row are the minimums of $\vec{a}$-part of $l_i$
and thus there are at least two of them.
\end{proof}

To prove the desired reduction we will make use of the following lemma
bounding the size of the minimal solution of $\MAP$ which was proven in~\cite{Bezem10IPL}.

\begin{lemma}[\cite{Bezem10IPL}] \label{lemma.bezem}
Let $M$ be a $\MAP$ system over variables $x_{1}, \ldots, x_{n}$ and let
$C$ be the sum of absolute values of all constants in $M$.
Then if $M$ is solvable then it has a solution $\vec{x}$ such that
$\max_{i \in [n]}\{x_{i}\} - \min_{i \in [n]}\{x_i\} \leq C$.
\end{lemma}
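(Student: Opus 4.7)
The plan is to reduce to the classical theory of difference constraint systems. Given a solvable $\MAP$ instance $M$ with an arbitrary solution $\vec{x}$, each constraint $\max\{x, y\} + k \geq z$ necessarily forces at least one of the linear inequalities $x + k \geq z$ or $y + k \geq z$ to hold, since the maximum equals one of its two arguments. For every MAP constraint I would pick such an ``active'' side dictated by $\vec{x}$ and replace the constraint by the chosen difference inequality $z - u \leq k$ with $u \in \{x, y\}$. This yields a pure difference constraint system $M'$ that is still satisfied by $\vec{x}$; in particular its weighted constraint graph $G$, which has an edge $u \to z$ of weight $k$ per chosen inequality, contains no negative cycle.

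Next I would adjoin a virtual source $s$ with a zero-weight edge to every variable node, and define $x'_i := d(s, i)$, the shortest-path distance from $s$ to $i$ in the augmented graph. Standard shortest-path theory gives that $\vec{x}'$ solves $M'$, since the triangle inequality $d(s, z) \leq d(s, u) + k$ is exactly the constraint $z - u \leq k$. Any solution of $M'$ automatically solves $M$ as well, because the chosen inequality $u + k \geq z$ trivially implies $\max\{x, y\} + k \geq z$ whenever $u \in \{x, y\}$.

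To bound the range of $\vec{x}'$, observe that the direct zero-weight edge $s \to i$ yields $x'_i \leq 0$. Conversely, the absence of a negative cycle in $G$ (and hence in the augmented graph, since $s$ has only outgoing edges) ensures that every shortest path is simple, so it uses each edge of $G$ at most once; by the triangle inequality, its weight has absolute value at most $\sum_c |k_c| = C$, giving $x'_i \geq -C$. Combining yields $\vec{x}' \in [-C, 0]^n$ and therefore $\max_i x'_i - \min_i x'_i \leq C$, as desired.

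The main subtlety I anticipate is obtaining the tight constant $C$ rather than the naive $2C$ one would get by bounding $|x'_i|$ independently from above and below: the trick is to use a \emph{single} virtual source connected only by outgoing zero-weight edges, so that one endpoint of the achieved range is automatically pinned at $0$. The remaining steps are standard shortest-path / difference-constraint folklore.
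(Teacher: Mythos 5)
Your argument is correct, but there is nothing in the paper to compare it against: the paper does not prove this lemma at all, it simply imports it from Bezem, Nieuwenhuis and Rodr\'{\i}guez-Carbonell~\cite{Bezem10IPL} as a black box. So what you have produced is a self-contained proof of a cited result. The proof itself is sound: an arbitrary solution $\vec{x}$ of $M$ selects, for each atom $\max\{x,y\}+k\geq z$, an active disjunct $u+k\geq z$; the resulting difference-constraint system $M'$ is satisfied by $\vec{x}$, hence its constraint graph has no negative cycle; the single-source shortest-path potentials $x'_i=d(s,i)$ from a source with zero-weight out-edges satisfy $M'$ and a fortiori $M$ (since $u+k\geq z$ implies $\max\{x,y\}+k\geq z$), are integral because all weights are integral, and lie in $[-C,0]$ because $d(s,i)\leq 0$ via the direct edge while a shortest walk can be taken to be a simple path whose weight is at least $-\sum_c|k_c|=-C$. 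Two cosmetic points: you should say that \emph{some} shortest path is simple (shortest walks need not be simple in the presence of zero-weight cycles, but cycle removal never increases the weight), and the remark about avoiding the ``naive $2C$'' is exactly the right observation --- pinning one end of the range at $0$ via the source is what gives the stated constant. This shortest-path/potential argument is standard for difference constraints, and the only non-routine ingredient is the observation that a solvable MAP instance can be ``linearized'' along any witness solution; that step is where solvability of $M$ is genuinely used, and you use it correctly.
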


Now we are ready to prove the reduction in the backwards direction.

\begin{theorem} \label{thm.MAP}
$\MAP \leq_m \tropsolv$.
\end{theorem}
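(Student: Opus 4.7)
The plan is to encode each $\MAP$ inequality as a small tropical gadget with two fresh auxiliary variables, and then to realise each abstract ``min attained twice'' row of the gadget as a subsystem of the full tropical system by means of Lemma~\ref{lemma.stars_restriction}.

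Substituting $x_i \mapsto -x_i$ turns every $\MAP$ inequality $\max\{x,y\} + k \ge z$ into the equivalent form $\min\{u,v\} \le w + k$, so the task reduces to encoding such min-form inequalities. For each constraint $c_\alpha : \min\{u_\alpha,v_\alpha\} \le w_\alpha + k_\alpha$ I introduce fresh variables $s_\alpha, t_\alpha$ and impose three conceptual tropical equations:
\begin{itemize}
\item[(A$_\alpha$)] $\min\{u_\alpha, v_\alpha, s_\alpha\}$ is attained at least twice;
\item[(B$_\alpha$)] $\min\{s_\alpha, t_\alpha\}$ is attained at least twice, i.e.\ $s_\alpha = t_\alpha$;
\item[(C$_\alpha$)] $\min\{s_\alpha, t_\alpha, w_\alpha + k_\alpha\}$ is attained at least twice.
\end{itemize}
A case analysis of (A$_\alpha$) shows $s_\alpha \ge \min\{u_\alpha, v_\alpha\}$ in every solution, while (C$_\alpha$) together with the equality $s_\alpha = t_\alpha$ from (B$_\alpha$) is easily seen to be equivalent to $s_\alpha \le w_\alpha + k_\alpha$; chaining the two yields the desired $\MAP$ inequality. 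Conversely, setting $s_\alpha := t_\alpha := \min\{u_\alpha, v_\alpha\}$ satisfies all three equations whenever the $\MAP$ inequality does.

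To realise each conceptual equation as actual rows of a single tropical system over the column set $(x_1,\dots,x_n, s_1,t_1,\dots,s_m,t_m)$, I apply Lemma~\ref{lemma.stars_restriction} with the obvious $\vec{a}$-parts --- namely $(0,0,0)$ on $(u_\alpha,v_\alpha,s_\alpha)$ for (A$_\alpha$), $(0,0)$ on $(s_\alpha,t_\alpha)$ for (B$_\alpha$), and $(0,0,k_\alpha)$ on $(s_\alpha,t_\alpha,w_\alpha)$ for (C$_\alpha$) --- together with a constant $C$ large enough to make the padded entries dominate. Concretely, take $C := C_0 + \max_\alpha |k_\alpha| + 1$, where $C_0$ is the bound supplied by Lemma~\ref{lemma.bezem} for the input system; by translation, any $\MAP$ solution may be assumed to lie in $[0, C_0]$, so on the extended assignment $s_\alpha = t_\alpha = \min\{u_\alpha,v_\alpha\}$ every $\vec{a}$-part value stays strictly below every padded entry and each Lemma~\ref{lemma.stars_restriction} subsystem is satisfied. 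In the reverse direction, Lemma~\ref{lemma.stars_restriction} guarantees that in any solution of the combined tropical system the minimum of each added row lies inside its $\vec{a}$-part and is attained at least twice, which forces the corresponding conceptual equation and hence the $\MAP$ inequality on the original variables.

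The resulting tropical system has $O(m(n+m))$ rows with polynomially bounded entries, so the reduction is polynomial-time. The main obstacle is that the condition ``minimum attained at least twice'' is equality-like, so a single tropical equation cannot directly express the strict inequality $\min\{u,v\} \le w+k$; the gadget circumvents this by splitting the inequality into the chain $\min\{u,v\} \le s_\alpha = t_\alpha \le w + k$, each link of which is of ``attained twice'' type, and Lemma~\ref{lemma.bezem} enters precisely to fix a polynomial value of $C$ so that the padding in Lemma~\ref{lemma.stars_restriction} actually isolates each local gadget in the common column space.
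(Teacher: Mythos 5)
Your reduction is correct, and it rests on the same two pillars as the paper's proof (Lemma~\ref{lemma.stars_restriction} to embed local ``minimum attained at least twice'' constraints into a common column space, and Lemma~\ref{lemma.bezem} to fix a padding constant $C$ that keeps a bounded \MAP{} solution away from the padded entries), but the core gadget encoding a single inequality is genuinely different. The paper duplicates \emph{every} variable ($x_i$ and $x_i'$, forced equal by a $(0,0)$-gadget) and then encodes $\min\{x,y\}+k\le z$ by a single six-column constraint on $(x,x',y,y',z,z')$ with $\vec{a}=(0,0,0,0,-k,-k+1)$: the $\pm 1$ offset on the $z$-pair makes it impossible for the doubled minimum to land there, so it must land on the $x$- or $y$-pair. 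You instead introduce a fresh pair $s_\alpha=t_\alpha$ \emph{per constraint} and split the inequality into the chain $\min\{u,v\}\le s_\alpha=t_\alpha\le w+k$, each link being a small ``attained twice'' condition; soundness follows because $\min\{u,v,s\}$ attained twice forces $s\ge\min\{u,v\}$ while $\min\{s,t,w+k\}$ attained twice with $s=t$ forces $s\le w+k$, and completeness follows by setting $s_\alpha=t_\alpha=\min\{u_\alpha,v_\alpha\}\in[0,C_0]$. Your version uses $n+2m$ columns and three gadgets per constraint versus the paper's $2n$ columns and one gadget per constraint plus one per variable, so the sizes are comparable; what your decomposition buys is that no arithmetic offset trick is needed (all $\vec{a}$-parts except the single entry $k_\alpha$ are zero), at the mild cost of auxiliary variables whose intended values must be exhibited explicitly in the completeness direction. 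Both arguments are polynomial-time $m$-reductions with entries of magnitude $O(C_0)$, hence of polynomial bit-size.
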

\begin{proof}
Suppose we are given a system $A$ of inequalities of the form
$\max\{x,y\} + k \geq z$.
First multiply all inequalities by $(-1)$ and make a transformation of variables
$x \mapsto (-x)$. Then we have a system $B$ of inequalities of the form $\min\{x,y\} + k \leq z$
which is solvable if and only if the initial system is solvable.
We denote by $C$ the sum of absolute values of all constants in $B$.

Now we are ready to construct a tropical linear system $T$.
Let us denote variables of $B$ by $x_{1}, \ldots, x_{n}$.
Our tropical linear system for each variable $x_{i}$ of $B$
will have two corresponding variables $x_{i}$ and $x_{i}^{\prime}$.
We would like these variables to be equal in any solution of $T$.
This can be easily achieved by the means of Lemma~\ref{lemma.stars_restriction}.
For this let in this lemma $k=2$, $\vec{a} = (0,0)$, $C = C$ and apply it to the variables $x_{i}, x_{i}^{\prime}$.
As a result we get the system $T_i$ which guaranties that in each its solution variables $x_{i}$ and $x_{i}^{\prime}$
are equal. We include systems $T_{i}$ for all $i$ into the system $T$.

Next we have to guarantee that for any inequality $\min\{x,y\} + k \leq z$ of $B$,
where $x,y,z$ are some variables among $x_{1}, \ldots, x_{n}$,
the same inequality is true for the solutions of $T$.
Since we already know that the variables $x_{i}$ and $x_{i}^{\prime}$
are equal for each solution of $T$, it suffices to say that
$$
\min\{x, x^{\prime}, y, y^{\prime}, z - k, z^{\prime} - k + 1\}
$$
is attained at least twice.
However, we have to add other variables into this inequality.
This can be done again by Lemma~\ref{lemma.stars_restriction}.
For this let in this lemma $k=6$, $\vec{a} = (0, 0, 0, 0, -k, -k+1)$, $C = C$
apply the lemma to the variables $x, x^{\prime}, y, y^{\prime}, z, z^{\prime}$
and include the resulting system to the system $T$.

Now the construction of $T$ is finished and we have to show that it is solvable if and only if $B$ is solvable.
Assume first that $T$ has a solution. Then it follows from the construction of $T$ that
for each $i = 1, \ldots, n$ variables $x_{i}$ and $x_{i}^{\prime}$ are equal.
And from this and again from the construction of $T$ it follows that each inequality of $B$ is true.

On the other hand, suppose that $B$ is satisfiable. Then, by Lemma~\ref{lemma.bezem} there is a solution $\vec{x}$
such that
$$
\max_{i \in [n]}\{x_{i}\} - \min_{i \in [n]}\{x_i\} \leq C.
$$
Since we can add any constant to all coordinates of $\vec{x}$
we can assume that $\min_{i \in [n]}\{x_i\} = 0$ and thus for all $i$ we have $0 \leq x_{i} \leq C$.
For the solution of $T$ let $x_{i}$ be the same as in the solution of $B$ and let $x_{i}^{\prime} = x_{i}$ for all $i$.
It is left to check that this vector is a solution of $T$.
We can check it for all rows separately. If the row is in $T_i$ for some $i$ then clearly the minimum is attained on $x_{i}$ and $x_{i}^{\prime}$
due to the choice of the constant $C$ in application of Lemma~\ref{lemma.stars_restriction}.
And if the row came from some inequality $\min\{x,y\} + k \leq z$ of $B$ then clearly the minimum is attained
either on $x$ and $x^{\prime}$, or on $y$ and $y^{\prime}$.
\end{proof}

From Theorem~\ref{thm.MAP} and Corollary~\ref{cor.MAP} we conclude the following.
\begin{corollary}
The problem $\tropsolv$ is polynomially equivalent to mean payoff games.
\end{corollary}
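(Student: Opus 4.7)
My plan is to chain together the two reductions already established with the known equivalence between $\MAP$ and mean payoff games. Corollary~\ref{cor.MAP} gives $\tropsolv \leq_m \MAP$, and Theorem~\ref{thm.MAP} gives $\MAP \leq_m \tropsolv$, so $\tropsolv$ is polynomial-time many-one equivalent to $\MAP$. It therefore suffices to cite the result of~\cite{MSS04scheduling}, mentioned in the introduction, that $\MAP$ is polynomial-time equivalent to mean payoff games. Composing the reductions yields both directions $\tropsolv \leq_m \MAP \leq_T$ mean payoff games and mean payoff games $\leq_T \MAP \leq_m \tropsolv$, giving the desired polynomial-time equivalence.

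The only small subtlety I would want to address is the direction of reduction: the equivalence between $\MAP$ and mean payoff games in~\cite{MSS04scheduling} is stated as polynomial-time (Turing) equivalence, while Theorem~\ref{thm.MAP} and Corollary~\ref{cor.MAP} give many-one reductions between $\tropsolv$ and $\MAP$. Since many-one reductions are a special case of Turing reductions and Turing reductions compose, the conclusion is polynomial-time Turing equivalence, which matches the statement of the corollary.

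I do not foresee any obstacle here, since both directions of the underlying equivalence are already proven: the hard work was in Theorem~\ref{thm.MAP}, whose technical engine is Lemma~\ref{lemma.stars_restriction} (enforcing that the minimum be attained in a designated sub-row by padding with suitably large entries) together with the bound of Lemma~\ref{lemma.bezem} on the size of the minimal $\MAP$ solution, used to choose the padding constant $C$. The present corollary is purely a bookkeeping step that combines these reductions with the external equivalence $\MAP \equiv$ mean payoff games, and I would state it in essentially one or two sentences in the final write-up.
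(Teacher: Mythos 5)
Your proposal is correct and matches the paper's own argument exactly: the corollary is derived by combining Corollary~\ref{cor.MAP} and Theorem~\ref{thm.MAP} with the known equivalence of $\MAP$ and mean payoff games from~\cite{MSS04scheduling}. Your remark about many-one versus Turing reductions is a reasonable clarification but does not change the substance.
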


Moreover we can also conclude the same for the problem $\tropsolv_{\infty}$.
\begin{corollary}
The problem $\tropsolv_{\infty}$ is polynomially equivalent to mean payoff games.
\end{corollary}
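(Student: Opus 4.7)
The plan is to assemble this corollary from pieces already in the paper, with one direction going through the finite case and the other through min-plus inequalities over $\bb{Z}_\infty$.

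For the direction mean payoff games $\leq_m \tropsolv_\infty$, I would simply chain together two reductions already established: the preceding corollary gives mean payoff games $\equiv \tropsolv$, and part~1 of Lemma~\ref{lemma.reduct_simple} gives $\tropsolv \leq_m \tropsolv_\infty$. Composing these, an instance of mean payoff games is reduced in polynomial time to an instance of $\tropsolv_\infty$, which is exactly what we need.

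For the reverse direction $\tropsolv_\infty \leq_m$ mean payoff games, I would invoke the version of Lemma~\ref{lemma.inequalities} that was explicitly stated for the $\bb{Z}_\infty$ domain: given an instance of $\tropsolv_\infty$, we effectively construct a system of min-plus linear inequalities over $\bb{Z}_\infty$ with the same set of solutions. Then we appeal to the result of Akian et al.~\cite{akian12mean_payoff} (recalled in the remark after Lemma~\ref{lemma.inequalities}), which asserts that the solvability problem for systems of min-plus inequalities over $\bb{Z}_\infty$ is polynomial-time equivalent to mean payoff games. Composing these two reductions yields the desired $\tropsolv_\infty \leq_m$ mean payoff games.

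In this argument there is no real obstacle, since both ingredients have been carefully set up earlier: Lemma~\ref{lemma.inequalities} was formulated so that the min-plus inequality reduction works uniformly for $\bb{Z}$ and $\bb{Z}_\infty$, and Lemma~\ref{lemma.reduct_simple} already handles the $\bb{Z}$-to-$\bb{Z}_\infty$ passage via the ``substitute infinities by large enough finite numbers'' trick. The one minor point to check is that in the first direction the $\bb{Z}$-valued solution produced from $\tropsolv$ is automatically a valid $\bb{Z}_\infty$-solution (which it is, since finite solutions are in particular not the all-$\infty$ vector that is forbidden in the $\bb{Z}_\infty$ setting), and in the second direction the reduction of Lemma~\ref{lemma.inequalities} preserves solutions coordinate-wise, so infinite coordinates on either side correspond. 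Hence the two directions combine to give the claimed polynomial equivalence.
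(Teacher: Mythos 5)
Your proposal is correct and follows essentially the same route as the paper: the paper likewise combines Theorem~\ref{thm.MAP} (mean payoff games $\leq_m \tropsolv$) with Lemma~\ref{lemma.reduct_simple} for one direction, and for the other cites the Akian et al.\ result for $\tropsolv_{\infty} \leq$ mean payoff games, pointing to the remark after Lemma~\ref{lemma.inequalities} which derives it exactly as you do, via the min-plus inequality reduction over $\bb{Z}_{\infty}$.
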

\begin{proof}
It was proven in Akian et al.~\cite{akian12mean_payoff} that $\tropsolv_{\infty}$ is polynomial time reducible to mean payoff games
(see also the remark after Lemma~\ref{lemma.inequalities}).
Theorem~\ref{thm.MAP} gives us that mean payoff games can be reduced to $\tropsolv$. Finally,
$\tropsolv$ reduces to $\tropsolv_{\infty}$ by Lemma~\ref{lemma.reduct_simple} and thus all three problems are equivalent.
\end{proof}
In particular, it follows that the problems $\tropsolv$ and $\tropsolv_{\infty}$ are polynomial time equivalent.
But the given proof of equivalence of these two purely tropical problems rather unnaturally goes through mean payoff games.
In Appendix~\ref{app.solv_direct} we give a direct proof of this equivalence.

One more corollary of our analysis concerns the equivalence and implication problems for tropical linear systems.
\begin{corollary} \label{cor.equiv}
The problems $\tropequiv$, $\tropequiv_{\infty}$ are polynomial time equivalent
to mean payoff games.
The problems $\tropimpl$ and $\tropimpl_{\infty}$ are polynomial time equivalent to mean payoff games under Turing reductions.
\end{corollary}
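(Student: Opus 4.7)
\noindent\textit{Proof proposal.}
The plan is to handle $\tropimpl$ and $\tropequiv$ separately. In each case I would reduce to mean payoff games through Lemma~\ref{lemma.inequalities} and reduce from mean payoff games by routing through $\tropsolv$ (Theorem~\ref{thm.MAP}) with the aid of an explicit unsolvable tropical system built from Lemma~\ref{lemma.stars_restriction}. For $\tropimpl \leq_T$ mean payoff games, I would observe that $A$ fails to imply a row $l$ exactly when, for some $j \in [n]$, there is a solution $x$ of $A$ with $\min_k (l_k + x_k)$ attained uniquely at column $j$. For each fixed $j$, this condition is the conjunction of the min-plus inequalities encoding $A$ (from Lemma~\ref{lemma.inequalities}) and the strict inequalities $l_j + x_j + 1 \leq l_k + x_k$ for $k \neq j$; the resulting system of min-plus inequalities is known to be equivalent to a mean payoff game~\cite{akian12mean_payoff}. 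Running one query per $j$ gives the Turing reduction. For the converse, I would combine Theorem~\ref{thm.MAP} with $\tropsolv \leq_T \tropimpl$: applying Lemma~\ref{lemma.stars_restriction} twice with $k = 2$, once with $\vec{a} = (0,1)$ and once with $\vec{a} = (1,0)$, produces a tropical system $U = \{l_1, \ldots, l_p\}$ whose solutions would have to satisfy both $x_1 = x_2 + 1$ and $x_1 = x_2 - 1$, so $U$ is unsolvable. Then $A$ is solvable iff $A$ fails to imply some $l_i$, giving a disjunctive Turing reduction.

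For $\tropequiv$, the many-one equivalence with mean payoff games splits into two directions. In the forward direction, $A \not\equiv B$ iff some row of one of the two systems is not implied by the other; by the analysis of the previous paragraph, each such non-implication is a disjunction over $j \in [n]$ of min-plus inequality systems, and the overall disjunction over the choice of row and of $j$ collapses to a single mean payoff game via the combining device recalled in Section~\ref{sec.prelim}. Closure of mean payoff games under complement then gives $\tropequiv \leq_m$ mean payoff games. In the reverse direction, using the unsatisfiable $U$ above, one has $A$ unsolvable iff $A \equiv U$, so $\neg\tropsolv \leq_m \tropequiv$; complement-closure combined with Theorem~\ref{thm.MAP} yields mean payoff games $\leq_m \tropequiv$.

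The main point requiring care is verifying that Lemma~\ref{lemma.stars_restriction} applied with $k=2$ genuinely forces the equality $a_1 + x_1 = a_2 + x_2$ on every solution, so that the two invocations with $\vec{a} = (0,1)$ and $\vec{a} = (1,0)$ are incompatible and yield a truly unsolvable $U$, rather than a system with stray solutions coming from the padding columns. Once this is checked, the $\bb{Z}_\infty$ variants follow with no change, since the constructed systems use only finite coefficients and both Lemma~\ref{lemma.inequalities} and Lemma~\ref{lemma.stars_restriction} are stated in both settings.
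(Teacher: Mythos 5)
Your reduction of $\tropimpl$ to mean payoff games works over $\bb{Z}$, but the closing claim that ``the $\bb{Z}_\infty$ variants follow with no change'' hides a genuine gap. Over $\bb{Z}_\infty$ the condition ``the minimum of $l_k+x_k$ is attained uniquely at $j$'' is \emph{not} captured by the inequalities $l_j+x_j+1\le l_k+x_k$ for $k\ne j$: since $\infty+1=\infty\le\infty$, a solution $x$ of $A$ that makes every term $l_k+x_k$ infinite satisfies all of these inequalities for every $j$, yet by the paper's conventions such an $x$ \emph{does} satisfy the row $l$ (the minimum is attained everywhere). So your oracle queries can report non-implication for instances where the implication holds, and the same defect propagates into your treatment of $\tropequiv_\infty$. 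This case is realizable (take $l$ finite exactly on a set $S$ and a solution of $A$ that is infinite on $S$ but finite elsewhere), so it must be excluded or handled separately. The paper avoids the issue entirely: it reduces equivalence to implication by the trivial $A$-versus-$A\cup\{l\}$ trick, converts to min-plus inequalities via Lemma~\ref{lemma.inequalities}, and then invokes the result of~\cite{AGK11polar_cones} that the \emph{implication} problem for min-plus inequalities over $\bb{Z}_\infty$ is equivalent to mean payoff games, with the finite cases lifted through Lemma~\ref{lemma.reduct_simple}. A smaller point over $\bb{Z}$: your strict inequalities have a single term on the left, which does not literally fit the two-sided format~\eqref{eq.min-plus-ineq} without $\infty$ entries; you need to pad with large constants and justify, via a bound on the magnitude of solutions, that the padding terms never interfere.

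Setting the $\bb{Z}_\infty$ issue aside, your route for the finite case is genuinely different from the paper's in the forward direction and is arguably more self-contained: by quantifying over the column $j$ where the minimum of $l$ is uniquely attained, you only need the \emph{solvability} result for min-plus inequalities from~\cite{akian12mean_payoff}, whereas the paper needs the stronger implication result of~\cite{AGK11polar_cones} (and, for that reason, only states the reduction over $\bb{Z}_\infty$ and pulls $\bb{Z}$ along by Lemma~\ref{lemma.reduct_simple}). Your reverse direction coincides with the paper's in spirit ($A$ unsolvable iff $A$ is equivalent to a fixed unsolvable system, plus complement-closure of mean payoff games); your explicit construction of that unsolvable system from two incompatible invocations of Lemma~\ref{lemma.stars_restriction} is a worthwhile addition, since the paper's phrase ``some fixed nonsolvable system'' glosses over the fact that the system must live on the same $n$ variables as $A$ and must not acquire stray solutions through its padding columns --- exactly the point you flag as needing care, and which Lemma~\ref{lemma.stars_restriction} indeed guarantees.
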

\begin{proof}
It is easy to see that the problem $\tropequiv$ is equivalent to the problem $\tropimpl$
(under Turing reduction).
Suppose we are given a tropical system $A$ and a tropical equation $l$.
Deciding whether $l$ follows from $A$ is equivalent to deciding whether systems $A$ and $A \cup \{l\}$ are equivalent.
On the other hand, if we need to check whether two systems $A$ and $B$ are equivalent it is enough to check whether
each equation of the second system follows from the first system and vise versa.
Thus we have that $\tropequiv$ is equivalent to $\tropimpl$.
The same argument gives us also that $\tropequiv_{\infty}$ is equivalent to $\tropimpl_{\infty}$.
Note, that the same argument works also for min-plus systems and systems of min-plus inequalities.

Next, it is easy to construct the reduction from $\tropsolv$ to $\tropequiv$.
Indeed, to check whether some system is solvable it is enough to check whether it is equivalent to
some fixed nonsolvable system.

Reduction of $\tropimpl$ to $\tropimpl_{\infty}$ is proven in Lemma~\ref{lemma.reduct_simple}.

Thus it is only left to show that $\tropequiv_{\infty}$ reduces to mean payoff games.
Assume that we are given two tropical systems $A_{1}$ and $A_{2}$ and we have to check whether they are equivalent.
First by Lemma~\ref{lemma.inequalities} for each of the systems we construct the system of inequalities with the same
solution sets. Then we reduce the equivalence problem for the systems of inequalities to implication problem for inequalities
by the same argument as above. And finally we can apply the result of Allamigeon et al.~\cite{AGK11polar_cones} stating that
the implication problem for min-plus inequalities over $\bb{Z}_{\infty}$ is equivalent to mean payoff games.

Keeping in mind the discussion in Preliminaries it is easy to see that these reductions can be transformed into
$m$-reductions for the case of the problems $\tropequiv$ and $\tropequiv_{\infty}$.

\end{proof}

It is not hard to see than analogous results for min-plus linear systems follows along the same lines from known results.
\begin{lemma} \label{lemma.min-plus}
The equivalence and implication problems for min-plus systems of linear equations over both $\bb{Z}$ and $\bb{Z}_{\infty}$
are equivalent to mean payoff games. The same is true for min-plus systems of linear inequalities.
\end{lemma}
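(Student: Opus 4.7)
The plan is to mirror the proof of Corollary~\ref{cor.equiv}, substituting the min-plus analogues of the three ingredients that went into it: (i) solvability being MPG-equivalent, (ii) the Turing inter-reducibility of equivalence and implication, and (iii) a direct MPG upper bound for the implication problem over the larger domain $\bb{Z}_\infty$.

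First I would argue, in each of the four settings (min-plus equations or inequalities, over $\bb{Z}$ or over $\bb{Z}_\infty$), that equivalence and implication are Turing inter-reducible: to decide whether $l$ follows from $A$ it suffices to test whether the systems $A$ and $A\cup\{l\}$ are equivalent, and conversely two systems $A,B$ are equivalent iff every constraint of $B$ follows from $A$ and vice versa. For the MPG-hardness direction I would then reduce solvability to equivalence by the same trick as in Corollary~\ref{cor.equiv}: a system is solvable iff it is not equivalent to a fixed system with no solution (for equations, e.g.\ $x_1 = x_1\tp 1$; for inequalities, e.g.\ $x_1 \leq x_1 - 1$). Solvability of min-plus equations is MPG-equivalent by Bezem et al.~\cite{Bezem10IPL}, and solvability of min-plus inequalities is MPG-equivalent by Akian et al.~\cite{akian12mean_payoff}, supplying the lower bound in all four settings.

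For the upper bound (implication reduces to MPG), the base case is the result of Allamigeon et al.~\cite{AGK11polar_cones} that implication for min-plus inequalities over $\bb{Z}_\infty$ is MPG-equivalent. Min-plus equations reduce trivially to inequalities: split each equation $\alpha = \beta$ into the two inequalities $\alpha \leq \beta$ and $\beta \leq \alpha$, which preserves the solution set, and then apply Allamigeon et al. For the $\bb{Z}$ versions I would use the same strategy as Lemma~\ref{lemma.reduct_simple}: any witness of non-implication over $\bb{Z}$ is also one over $\bb{Z}_\infty$, and a witness over $\bb{Z}_\infty$ can be converted to one over $\bb{Z}$ by replacing each infinite coordinate by a sufficiently large finite number; the relevant min-plus small-model bound ensures the needed constant is polynomial in the input.

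The main obstacle is the last step: for the implication problem the `large constant' substitution has to preserve not only the satisfaction of the constraints of $A$ but also the violation of the candidate consequence $l$. The key observation to make rigorous is that one can uniformly pick a threshold $N$ depending polynomially on the bit-length of $A$ and $l$ such that replacing $+\infty$ by $N$ in a witness keeps all active minima in $A$ unchanged (they are attained on finite coordinates) while leaving $l$ violated. Once that is nailed down, the four problems collapse to MPG through the chain of reductions above.
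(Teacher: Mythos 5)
Your proposal matches the paper's proof in all essentials: the paper likewise proves this by rerunning the argument of Corollary~\ref{cor.equiv}, replacing Lemma~\ref{lemma.inequalities} with the trivial splitting of each min-plus equation into two inequalities, citing Allamigeon et al.~\cite{AGK11polar_cones} for implication of inequalities over $\bb{Z}_\infty$, and using the solvability results and the $\bb{Z}$-to-$\bb{Z}_\infty$ substitution argument exactly as you describe. Your final worry about the infinity-substitution is legitimate but resolves as in Lemma~\ref{lemma.reduct_simple} (it is a pure existence argument, so no explicit polynomial bound on the threshold is even required).
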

\begin{proof}
The same proof as for Corollary~\ref{cor.equiv} works. Instead of Lemma~\ref{lemma.inequalities} we can apply
trivial relation between min-plus equalities and inequalities.

The result on implication problem for min-plus systems of linear inequalities over $\bb{Z}_{\infty}$ was already proven in~\cite{AGK11polar_cones}.
\end{proof}

For both min-plus and tropical linear systems we give a direct combinatorial proofs of equivalence between
solvability and equivalence problems in Appendix~\ref{app.equivalence}.

\section{Dimension and the tropical rank}
\label{sec.rank}

In the case of classical linear systems the dimension of the solution space is closely related to the rank of the matrix.
The natural idea is that maybe the dimension of the tropical
prevariety is also related to some ``rank'' of the tropical matrix
and $\NP$-completeness can be derived from the completeness for this ``rank''.

There are three notions of the ``rank'' in tropical algebra studied in the literature:
\emph{Barvinok rank}, \emph{Kapranov rank} and \emph{tropical rank} (see~\cite{DSS05rank} for the definitions).
For them there is a relation
\begin{equation} \label{eq.ranks}
tropical\ rank(A) \leq Kapranov\ rank(A) \leq Barvinok\ rank(A).
\end{equation}
for any matrix $A$. All inequalities can be strict in~\eqref{eq.ranks}.

We will show the following result.
\begin{lemma}
For any matrix $A \in \mathbf{R}^{m \times n}$ we have
$$
n - tropical\ dimension(A) \leq tropical\ rank(A),
$$
and the inequality can be both tight and strict.
Here by the tropical dimension we mean the affine variant of dimension.
\end{lemma}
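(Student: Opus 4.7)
The plan is to analyse the polyhedral cell decomposition of the tropical prevariety of $A$ directly and to derive a contradiction from the hypothetical existence of a maximal cell of dimension strictly less than $n-r$, where $r$ denotes the tropical rank. Assume the prevariety is non-empty. Recall it is stratified into cells $C_{(S_i)}$ indexed by tuples $(S_1,\ldots,S_m)$ of subsets $S_i\subseteq[n]$ with $|S_i|\geq 2$: in $C_{(S_i)}$ the minimum of the $i$-th row is attained exactly at the columns in $S_i$, and the cell is cut out by the affine equations $a_{ij}+x_j=a_{ij'}+x_{j'}$ for $j,j'\in S_i$ together with strict inequalities for the remaining columns.

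First I would argue that the maximal dimension is attained on a cell with $|S_i|=2$ for every $i$: starting from any non-empty cell, shrinking each $S_i$ to a two-element subset enlarges the affine span without lowering the dimension of the associated closure. Fix such a cell with $S_i=\{j_i,j_i'\}$ and encode each row $i$ as the edge $\{j_i,j_i'\}$ of a graph $G$ on vertex set $[n]$. A standard linear-algebra computation then shows that the dimension of this cell equals $n-\rho$, where $\rho$ is the number of edges in a spanning forest of $G$. It therefore suffices to prove $\rho\leq r$.

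Suppose toward contradiction that $\rho\geq r+1$, and pick $r+1$ edges $e_1,\ldots,e_{r+1}$ in the spanning forest; since each row contributes exactly one edge, they arise from $r+1$ pairwise distinct rows $i_1,\ldots,i_{r+1}$. Delete one vertex from each connected component of the sub-forest $\{e_1,\ldots,e_{r+1}\}$, leaving exactly $r+1$ vertices $c_1,\ldots,c_{r+1}$; together with the rows $i_1,\ldots,i_{r+1}$ they span an $(r+1)\times(r+1)$ submatrix $A'$. The core claim is that $A'$ is tropically nonsingular, contradicting the assumption that the tropical rank of $A$ equals $r$. To verify it, fix any point $x$ in the closure of the cell; since $\sum_k x_{\sigma(k)}$ is independent of the permutation $\sigma$ of $\{c_1,\ldots,c_{r+1}\}$, the tropical permanent of $A'$ is minimised precisely on those $\sigma$ with $\sigma(k)\in\{j_{i_k},j'_{i_k}\}$ for all $k$. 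Rooting each component of the sub-forest at its deleted vertex and matching every remaining vertex to the row of its parent edge produces one such $\sigma$, and the tree structure forces this matching to be unique.

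The hard part is exactly this final combinatorial-to-algebraic step: turning the graph-theoretic fact ``$\rho$ independent equations coming from the cell'' into a genuine tropically nonsingular $(r+1)\times(r+1)$ submatrix, where the concrete point of the cell serves to pin down the optimal permutation. For the addendum about tightness and strictness: tightness is witnessed by the $1\times 2$ matrix $A=(0,0)$, where $n=2$, the tropical rank equals $1$, and the prevariety is the line $\{x_1=x_2\}$ of dimension $1$; strict inequality occurs for $A=\begin{pmatrix}0&0&1\\0&0&2\end{pmatrix}$, where $n=3$ and the tropical rank equals $2$, yet the prevariety contains the $2$-dimensional region $\{x\in\mathbb{R}^3: x_1=x_2,\;x_3\geq x_1-1\}$.
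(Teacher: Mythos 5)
Your route is genuinely different from the paper's: the paper proves $\dim \geq n-r$ directly, by taking a maximal tropically independent set of $r$ columns, adjoining one further column, solving the resulting $m\times(r+1)$ system, and extending that solution by large values in the remaining $n-r-1$ coordinates, which can then be perturbed freely (plus the all-ones direction). You instead try to bound the codimension of a cell of the type decomposition by extracting a tropically nonsingular $(r+1)\times(r+1)$ submatrix. The second half of your argument --- rooting each component of the sub-forest at its deleted vertex, matching each edge to its child endpoint, and using the constancy of $\sum_k x_{\sigma(k)}$ over permutations to see that this matching is the unique minimizer of the tropical permanent --- is correct and is a nice piece of combinatorics. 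Your tightness and strictness examples also check out.

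The gap is in the first step. The claim that the maximal dimension is attained on a cell with $|S_i|=2$ for every $i$ is not justified (shrinking $S_i$ does enlarge the affine span of the equations, but it also turns former equalities into strict inequalities, so the shrunken cell may well be empty), and it is in fact false. Take
\[
A=\begin{pmatrix}0&0&0\\0&0&M\\0&M&0\\M&0&0\end{pmatrix}
\]
with $M$ large: the prevariety is exactly the line $x_1=x_2=x_3$, a single cell on which row $1$ has $S_1=\{1,2,3\}$, so there is no nonempty cell with all $|S_i|=2$. Your later step ``each row contributes exactly one edge, so the $r+1$ forest edges come from pairwise distinct rows'' depends on this reduction; without it, a row with $|S_i|=s$ contributes a clique and can supply up to $s-1$ edges of the spanning forest, and the square submatrix cannot be formed. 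The bound $\rho\le r$ genuinely fails for cells with large types: for the single row $(0,0,0)$ the cell $S_1=\{1,2,3\}$ has $\rho=2$ while the tropical rank is $1$. So to salvage the argument you would need to prove that a maximal-dimensional cell admits a spanning forest of its clique graph whose edges can be assigned injectively to rows --- a nontrivial rainbow/transversal statement that is neither stated nor proved. As written, the proof has a genuine hole at exactly the point you flag as the reduction ``from any non-empty cell'' to the two-element case.
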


This lemma together with~\eqref{eq.ranks} shows that there is a relation between the tropical dimension
and ranks of the tropical matrix, but this relation is not enough for computational needs.

\begin{proof}[Proof of the lemma]

To prove the inequality let the tropical rank of the matrix $A$ be equal to $r$ and consider
the maximal set $C$ of tropically independent columns in $A$,
that is the maximal set of columns such that the tropical linear system generated by them is unsolvable.
The size of this set of columns is equal to $r$ (see~\cite{DSS05rank,izhakian09rank,grigoriev10system}).
Add one of the remaining $n-r$ columns to $C$ and denote the resulting $m \times (r+1)$ matrix by $C^{\prime}$.
The columns in $C^{\prime}$ are tropically dependent, so there is a solution
to the tropical linear system with the columns $C^{\prime}$.
This solution can be extended to the solution of the whole system by fixing all coordinates $x_{i}$
with $i \in [n]\setminus C^{\prime}$ to be large enough numbers.
Note that these coordinates of the resulting solution of $A$ can be changed locally (if the numbers were chosen large enough).
Thus we have that the solution space contains subspace of dimension $n-(r+1)$.
But note that currently we have projective dimension: some of the coordinates never change in this subspace.
So we can add the vector $(1, \ldots, 1)$ to our subspace and get the desired subspace of dimension $n-r$.

To show that the inequality can be tight consider for example the matrix
$$
\left( \begin{BMAT}(e){ccc}{cc}
1 & 0 & 0 \\
0 & 1 & 0 \\
\end{BMAT}
\right).
$$
It is easy to see that the solution space of the corresponding tropical system consists of points
$(c,c,c)$ for any $c$ and thus has dimension $1$. The tropical rank of this matrix is $2$.
To see this consider the submatrix defined by the first two columns.

To show that on the other hand the inequality can be strict consider the matrix
$$
\left( \begin{BMAT}(e){ccccc}{cccc}
0 & 0 & 0 & 0& 0\\
1 & 0 & 0 & 0& 0\\
1 & 1 & 0 & 0& 0\\
1 & 1 & 1 & 0& 0\\
\end{BMAT}
\right).
$$
The tropical rank of this matrix is $4$. For this consider the submatrix defined by the first four columns.
On the other hand the dimension of the solution space is also $4$ since it contains subspace generated by $(1,0,0,0,0)$,
$(0,1,0,0,0)$, $(0,0,1,0,0)$, $(1,1,1,1,1)$.

Both of the examples above can be easily generalized to arbitrary matrix size.
\end{proof}

\section{Combinatorial characterization of the dimension of the tropical prevariety}
\label{sec.characterization}

In our analysis it will be convenient to use the following definition.

\begin{definition}
Let $A$ be a matrix of size $m \times n$. We associate with it the table $A^{*}$ of the same size
$m \times n$ in which we put the star $*$ to the entry $(i,j)$ iff $a_{ij} = \min_{k} a_{ik}$ and
we leave all other entries empty.
\end{definition}

The table $A^{*}$ captures properties of the tropical system $A$ essential to us.
For example, the vector $x = (x_{1}, \ldots, x_{n})$ is a solution to the system $A$ iff
there are at least two stars in every row of the table $(\{a_{ij} + x_{j}\}_{ij})^{*}$.

Next we give a combinatorial characterization of local dimension (at a given point) of a tropical prevariety
in terms of the table $A^*$.
For this we will use the following block-triangular form of the matrix.

\begin{definition}
The block triangular form of size $d$ of the matrix $A$ is a partition of the set of rows of $A$ into sets
$R_{1}, R_{2}, \ldots, R_{d}$ (some of the sets $R_{i}$ might be empty)
and a partition of the set of columns of $A$ into nonempty sets $C_{1}, \ldots, C_{d}$ with the following properties (see figure):
\begin{enumerate}
\item for every $i$ each row in $R_{i}$ has at least two stars in columns $C_{i}$ in $A^*$;
\item if $1 \leq i < j \leq d$ then rows in $R_{i}$ have no stars in columns $C_{j}$ in $A^*$.
\end{enumerate}
\end{definition}

\begin{figure}
\[
A^{\ast} = %\left[
\begin{blockarray}{*{13}{p{4pt}}}
&      & $C_1$  &      &      &   $C_2$        &       &      &  $\cdots$   &       &      &    $C_d$    &     \\
\begin{block}{p{10pt}(*{3}{*{3}{p{4pt}}|}*{3}{p{4pt}})}%*{3}{p{4pt}}|*{3}{p{5pt}}|*{3}{p{5pt}})}
       & $\ast$ & $\ast$ &      &      &           &       &      &           &       &      &           &     \\
$R_1$    & $\ast$ &      & $\ast$ &      & $\emptyset$ &       &      & $\emptyset$ &       &      & $\emptyset$ &     \\
       &      & $\ast$ & $\ast$ &      &           &       &      &           &       &      &           &     \\ \cline{2-13}
       & $\ast$ &      &      & $\ast$ & $\ast$      &       &      &           &       &      &           &     \\
$R_2$    & $\ast$ & $\ast$ &      &      & $\ast$      &$\ast$ &      & $\emptyset$ &       &      & $\emptyset$ &     \\
       &      &      &      & $\ast$ &           &$\ast$ &      &           &       &      &           &     \\ \cline{2-13}
       & $\ast$ &      &      &      &           &       &$\ast$&           &$\ast$ &      &           &     \\
$\vdots$ & $\ast$ &      &      & $\ast$ & $\ast$      &$\ast$ &$\ast$&$\ast$     &       &      & $\emptyset$ &     \\
       &      & $\ast$ &      &      &           &       &      &$\ast$     &$\ast$ &      &           &     \\ \cline{2-13}
       &      &      &      &      & $\ast$      &       &      &           &       &      &$\ast$     & $\ast$ \\
$R_d$    &      &      & $\ast$ &      &           &$\ast$ &      &$\ast$     &       &$\ast$&$\ast$     &     \\
       &$\ast$&$\ast$&      &      &           &       &$\ast$&           &       &$\ast$&$\ast$     &     \\
\end{block}
\end{blockarray}
%\right]
\]
\end{figure}

We are looking for a block triangular form of the matrix with the largest possible $d$.
We next make several observations on the structure of the block triangular form
of the maximal size:
\begin{itemize}
  \item Without loss of generality the pairs $(C_{i}, R_{i})$ with empty $R_{i}$ can be moved to the beginning of the list
  permuting correspondingly the list of $C_i$-s and the list of $R_i$-s.
  \item We can assume that the pairs with empty $R_{i}$ have $|C_{i}|=1$. Indeed, if $|C_{i}|>1$ we can break
  it into several sets of size $1$ without violating the properties of the block triangular form and the size will only increase.
\end{itemize}

Now we are ready to give a combinatorial characterization of the dimension of the prevariety.

\begin{theorem} \label{thm.dim_char}
Assume that the zero vector is a solution of the tropical linear system $A$.
Then the local projective dimension of the system $A$ in zero solution is equal to the maximal $d$ such that there is
a block triangular form of $A$ of size $d+1$.
\end{theorem}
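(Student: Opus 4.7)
The plan is to prove the equality through a two-sided argument: first, build from any block triangular form of size $d+1$ an explicit $d$-dimensional projective family of solutions near $0$; then, extract a BTF of size at least $d+1$ from any local piece of projective dimension $d$.

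For the forward direction, given blocks $(C_1, R_1), \ldots, (C_{d+1}, R_{d+1})$, I would parameterize perturbations by $(v_1, \ldots, v_{d+1})$ satisfying $v_1 > v_2 > \cdots > v_{d+1}$ and $|v_k|$ smaller than the nonzero gaps $a_{ij} - \min_{\ell} a_{i\ell}$, setting $x_j = v_k$ whenever $j \in C_k$. This is an open cone of affine dimension $d+1$, and quotienting by the $\vec{1}$-direction (the line $v_1 = \cdots = v_{d+1}$) leaves $d$ projective dimensions. Each such $x$ is a tropical solution: condition~2 of the BTF confines the stars of any row $i \in R_k$ to $C_1 \cup \cdots \cup C_k$, while condition~1 gives at least two stars in $C_k$. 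Among starred positions the minimum value of $x_j$ is $v_k$, attained at (at least) the two stars in $C_k$, and the smallness of the $|v_k|$ prevents any unstarred entry from competing for the minimum of $a_{ij} + x_j$.

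For the reverse direction, suppose the local projective dimension at $0$ is $d$, so that the solution set contains a polyhedron $P$ of affine dimension $d+1$ with $0$ in its relative interior. On the columns define the equivalence relation $j \sim j'$ iff $x_j = x_{j'}$ for all $x \in P$, and let its classes be $C_1, \ldots, C_c$. Then $P$ is contained in the $c$-dimensional linear subspace cut out by these equalities, so $c \geq d+1$. Pick $x_0 \in \mathrm{relint}(P)$ close enough to $0$ and outside the finitely many proper subspaces $P \cap \{x_j = x_{j'}\}$ for $j \not\sim j'$; the classes then take pairwise distinct values on $x_0$. Reorder so that $x_0|_{C_1} > x_0|_{C_2} > \cdots > x_0|_{C_c}$, and for each row $i$ let $k(i)$ be the largest index $k$ such that row $i$ has a star in $C_k$. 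Since $x_0$ is a solution with small coordinates, the minimum of $a_{ij} + x_0(j)$ over $j$ equals $\min_{\ell} a_{i\ell} + x_0|_{C_{k(i)}}$ and must be attained at least twice among starred positions, forcing at least two stars in $C_{k(i)}$. Setting $R_k = \{i : k(i) = k\}$ produces a BTF of size $c \geq d+1$; if $c > d+1$, merging two adjacent blocks $(C_k,R_k),(C_{k+1},R_{k+1})$ into $(C_k \cup C_{k+1}, R_k \cup R_{k+1})$ preserves both BTF conditions and reduces the size to exactly $d+1$.

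The main obstacle is the reverse direction: verifying combinatorially that the partition read off from the generic point $x_0$ satisfies both block-triangular conditions, and establishing the dimension inequality $c \geq d+1$ by relating the affine dimension of $P$ to the number of column equivalence classes. The dimension bookkeeping between affine dimension in $\mathbb{R}^n$ and projective dimension after quotienting by $\vec{1}$ (the extra $+1$ that matches BTF size $d+1$ to projective dimension $d$) must be tracked carefully, and the genericity argument for $x_0$ — avoiding the finitely many hyperplanes that collapse distinct classes — is routine but essential for the class ordering to be well-defined.
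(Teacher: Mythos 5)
Your proof is correct and follows essentially the same two-sided strategy as the paper: for the lower bound, perturbing the blocks by small decreasing values (the paper uses the negated characteristic vectors of $\bigcup_{j\ge i}C_j$), and for the upper bound, reading off the blocks from the sorted distinct coordinate values of a single generic solution in the maximal-dimensional piece and checking the two star conditions via the row minima. The only difference is cosmetic: where you get the count of at least $d+1$ distinct values from the equivalence relation ``equal on all of $P$'' together with hyperplane avoidance, the paper takes a combination $\sum_i \varepsilon_i f^i$ of basis vectors of the cone with the $\varepsilon_i$ linearly independent over $\mathbb{Q}$ and bounds the number of distinct coordinate tuples by a rank argument.
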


Clearly the case of arbitrary solution can be reduced to the zero solution.
\begin{proof}
We also can assume that the minimum in each row of $A$ is $0$.

Denote by $d$ the size of the largest block triangular form minus $1$
and denote the local projective dimension of the tropical prevariety in zero solution by $\dim A$.

It is not hard to see that $\dim A \geq d$.
Indeed, consider the block triangular form of size $d+1$ and take as the basis for the subspace of tropical prevariety
the negated characteristic vectors of $\cup_{j \geq i} C_j$ for all $i = 2, \ldots, d+1$.
It is clear that any point in this space close enough to the zero vector is a solution to the tropical system.

It remains to prove that $\dim A \leq d$.
Consider the polytope of the largest dimension in the tropical prevariety containing the zero point.
We can restrict ourselves to a cone of the same dimension which
vertex is zero point and such that the neighborhood of the vertex intersected with the cone lies in the polytope.

Consider some independent set of integer vectors $f^{1}, f^{2}, \ldots, f^{k}$ lying in the cone
where $k=\dim A$.

Since we consider the projective version of dimension, we are working in projective space and after
an addition of vector $c \cdot \vec{1} = (c, \ldots, c)$
any vector in the cone remains in the cone.
So we can agree that all coordinates of vectors $f^1, \ldots, f^k$ are non-positive.

For each coordinate $i$ of the basis vectors $f^1, \ldots, f^k$ consider the tuple $f_i = (f^{1}_i, \ldots, f^{k}_i)$ of all $i$-th
coordinates of vectors in the basis. Note that $a = |\{f_1, f_2, \ldots, f_n\}| \geq k+1$,
that is the number of different vectors among $f_i$ is at least $k+1$.
Indeed, add the vector $f^{0} = (-1, \ldots, -1)$ to the basis to get the basis of the cone in $\bb{R}^n$.
If the number of different tuples among $f_i$ is less than $k+1$ then we can consider the vector equality $ \sum_i c_i f^i = \vec{0}$
for $c_i \in \mathbb{R}$
as a linear system on $c_0, c_1, \ldots, c_k$. This system has less than $k+1$ equalities and thus has nonzero solution.
This means that $f^0, f^1, \ldots, f^k$ are linearly dependent and we have a contradiction.

Next, consider $\eps_{1}, \ldots, \eps_{k}$ -- positive small enough
real numbers linearly independent over rationals and consider the vector
$$
f^{\prime} = \sum_{i=1}^{k} \eps_i f^i.
$$
``Small enough'' means that the absolute values of the coordinates of $f^{\prime}$ are less than $1$.
The number of different coordinates of this vector is equal to $a \geq k+1$.
Indeed, due to the linear independence of $\{\eps_i\}_{i=1}^{k}$ we have different sums for different $f_{i}$'s.
Let us enumerate the coordinates of $f^{\prime}$ in the increasing order: $b_{1} < \ldots < b_{a} \leq 0$.

Let us denote by $B_{j}$ for $j \in [a]$ the set of coordinates of $f^{\prime}$ with the values at most $b_{j}$,
that is $B_{j}= \{l \in [n] \mid f^{\prime}_{l} \leq b_{j}\}$. Note that $B_{1} \subset B_{2} \subset \ldots \subset B_{a}$.

\begin{claim}
For every $j$ and for every row $l$ columns in $B_{j}$ contain in $l$ either no stars,
or at least two stars in the table $A^{\ast}$
\end{claim}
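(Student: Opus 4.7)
The plan is to verify that the vector $f'$ is itself a solution to the tropical system $A$ (not merely a small positive multiple of it), and then to read off the claim by inspecting which columns attain the row minimum. The first and most delicate step is that $f'$ lies in the cone. Because the cone is a convex cone with apex at $0$ containing the linearly independent vectors $f^1, \ldots, f^k$, it contains every positive linear combination, in particular $f' = \sum_{s=1}^{k} \epsilon_s f^s$. The hypothesis that the $\epsilon_s$ were chosen small enough that $|f'_j| < 1$ for every coordinate $j$ keeps $f'$ inside the neighborhood of $0$ where the cone coincides with the polytope of the prevariety. Consequently $f'$ is itself a solution of the tropical system $A$.

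Next I would identify, for each row $r$, which columns can possibly attain the minimum of $a_{rj} + f'_j$. Let $S_r = \{j : a_{rj} = 0\}$ be the starred columns in row $r$. Since the entries of $A$ are integers and the minimum in each row is $0$, every non-starred entry satisfies $a_{rj} \geq 1$, hence $a_{rj} + f'_j > 1 - 1 = 0$. Starred columns on the other hand give $a_{rj} + f'_j = f'_j \leq 0$. So the minimum of row $r$ is $\min_{j \in S_r} f'_j$, attained exactly at the columns $j \in S_r$ realizing this minimum.

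Now set $t = t(r) = \min\{s \in [a] : B_s \cap S_r \neq \emptyset\}$. By definition of the $B_s$ and because $B_{t-1} \cap S_r = \emptyset$, the set $B_t \cap S_r$ is exactly the set of starred columns $j$ with $f'_j = b_t$, and by the previous step these are precisely the columns where the row minimum is attained. Since $f'$ is a solution, this set of minimizers must have at least two elements, giving $|B_t \cap S_r| \geq 2$. For an arbitrary $j$, if $B_j \cap S_r = \emptyset$ the claim is immediate; otherwise the minimality of $t$ forces $t \leq j$, so $B_t \subseteq B_j$ and therefore $|B_j \cap S_r| \geq |B_t \cap S_r| \geq 2$.

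The only real obstacle is the first step — confirming that $f'$ itself is in the prevariety. This is where convexity of the chosen cone and the smallness condition $|f'_j| < 1$ do their work together; once $f'$ is known to be a solution, Steps 2–4 are a direct unpacking of what ``minimum attained at least twice'' means in terms of the stars of $A^{\ast}$ and the level sets $B_s$ of $f'$.
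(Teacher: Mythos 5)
Your proof is correct and follows essentially the same route as the paper's: both rest on $f'$ being a solution and on the observation that, after translating row $r$ by $f'$, the minimum is attained exactly at the stars of $A$ lying in the lowest level set of $f'$ that meets the starred columns of that row. The only difference is organizational --- the paper phrases this as an induction on $j$ while you identify the critical index $t(r)$ directly, and you make explicit the convexity argument for why $f'$ lies in the cone, which the paper leaves implicit.
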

\begin{proof}
The proof goes by induction on $j$.

For the base of induction consider the set $B_{1}$.
Note that the columns in $B_{1}$ are precisely the set of columns with the smallest
coordinates of $f^{\prime}$. Suppose that there is a star in the row $l$ and in the columns of $B_{1}$.
Let us add to each column $i$ of the matrix $A$ the $i$-th coordinate of $f^{\prime}$ (which is non-positive).
Since $f^{\prime}$ is a solution the resulting matrix should have at least two stars in row $l$. But the star
among the columns of $B_{1}$ has the value $b_1$
which is the smallest possible value of coordinates in the row $l$. Thus there should be one more coordinate with the same value
and this can appear only in columns of $B_{1}$ and it can only be a star of $A$.
Thus columns of $B_{1}$ have at least two stars in the row $l$.

For induction step assume that we have proved the claim for $B_{j-1}$ and
consider the set $B_{j}$. If row $l$ contains two stars in $B_{j-1}$ it also contains
two stars in $B_{j}$. Thus we can assume that row $l$ contains no stars in $B_{j-1}$.
Assume that there is a star in $B_{j}$. Again add coordinates of $f^{\prime}$ to the corresponding columns of $A$.
Since there are no stars in $B_{j-1}$ all corresponding coordinates of the row $l$ in these columns are positive
(recall that the coordinates of $f^{\prime}$ are less than $1$).
The star in $B_{j}$ has coordinate $b_{j}$ and this is the smallest possible value of coordinate.
Since $f^{\prime}$ is a solution to the system there should be one more coordinate with the same value
and this can be only the coordinate in $B_{j}$ and also the initial star of $A$.
Thus there are at least two stars in $B_{j}$ in the row $l$.
\end{proof}

Now we are ready to describe the sets of rows and columns
corresponding to the desired triangular form.
The size of this form will be $a$.
For the set $C_{a-i+1}$ we let $B_{i} \setminus B_{i-1}$ (note that $C_{a-i+1}$ is nonempty).
The choice of $R_{i}$ is straightforward:
we take all rows that have at least two stars in the set $C_{i}$
and no stars in $C_{i+1}, \ldots, C_{k}$.

Properties of the triangular form follows from the construction.
We only have to check that every row is in some $R_{i}$.
Consider arbitrary row $l$ and let $i$ be the smallest number such that $B_{i}$
contains a star of the row $l$. By the claim this star cannot be unique,
and since by the choice of $B_{i}$ there are no stars in row $l$ and in the columns of $B_{i-1}$, we have
$l \in R_{a-i+1}$.

\end{proof}

It is easy to see that the same argument works for the tropical linear systems over $\bb{Z}_{\infty}$:
we can ignore infinite coordinates of the solution we consider and infinite entries in the matrix do not affect the proof.
That is, given a solution $x \in \bb{Z}_{\infty}$ we remove from the matrix $A_x$
(see Preliminaries for the definition) all columns for which the corresponding coordinate of $x$
is infinite and denote the resulting matrix by $\widetilde{A}_{x}$. Consider the corresponding
table $\widetilde{A}_{x}^{\ast}$. It is not hard to see that the rows consisting of infinities
does not affect the maximal size of the block triangular form.
Note that infinities in other rows of the matrix can not become stars in $\widetilde{A}_{x}^{\ast}$ in the neighborhood
of $x$ and thus if we substitute them by large enough numbers neither the local dimension,
nor the block triangular forms of the maximal size change.

Almost the same argument works for min-plus linear systems $A \tp x = B \tp x$,
where $A, B$ are in $\bb{Z}^{m \times n}$ or $\bb{Z}_{\infty}^{m \times n}$.
Here we consider the joint matrix $D = \left(
                                         \begin{array}{c|c}
                                           A & B \\
                                         \end{array}
                                       \right)
$
and also consider the table $D^{\ast}$.
The block triangular form of size $d$ is now the row partition
$R_{1}, R_{2}, \ldots, R_{d}$, where some of the sets $R_{i}$ might be empty,
and the partition $C_{1}, \ldots, C_{d}$ of \{1, \ldots, n\}, where all $C_{i}$ are nonempty.
For a given set $C_{i}$ we associate the columns in $A$-part of $D$ with the corresponding numbers
and the columns in $B$-part of $D$ with the same numbers.
The partitions should satisfy the following properties:
\begin{enumerate}
\item for every $i$ each row in $R_{i}$ has at least one star in columns with numbers $C_{i}$ in $A$-part of $D$ and
at least one star in columns with numbers $C_{i}$ in  $B$-part of $D$;
\item if $1 \leq i < j \leq d$ then rows in $R_{i}$ have no stars in columns with numbers $C_{j}$ in both parts of $D^{\ast}$.
\end{enumerate}
The analog of Theorem~\ref{thm.dim_char} can be proven by a straightforward adaptation of the proof above.

Finally, the same construction works also for min-plus systems of inequalities over $\bb{Z}$ and $\bb{Z}_{\infty}$.
For the system $A \tp x \leq B \tp x$ we again consider the joint matrix $D = \left(
                                         \begin{array}{c|c}
                                           A & B \\
                                         \end{array}
                                       \right)
$ of left-hand side and right-hand side of the system, again consider $D^{\ast}$
and again consider similar partitions, but now we have different
requirements for partitions to be the block triangular form:
\begin{enumerate}
\item for every $i$ and each row $l$ in $R_{i}$ if there is a star in columns with numbers $C_{i}$ in $B$-part of $D$ in $l$
then there is a star in columns with numbers $C_{i}$ in $A$-part of $D$ in $l$;
\item if $1 \leq i < j \leq d$ then rows in $R_{i}$ have no stars in columns $C_{j}$ in $C^{\ast}$.
\end{enumerate}
Again, the combinatorial characterization is an easy adaptation of the proof of Theorem~\ref{thm.dim_char}.

\section{Computing the dimension of tropical and min-plus linear prevarieties is $\NP$-complete}
\label{sec.completeness}

%In this section we consider the following problem.

Before proving the completeness result we prove the following technical lemma.
%\begin{lemma}\nb{We need analogous statement for min-plus!}
%If we are given a tropical linear system $A$ with absolute values of coefficients at most $M$,
%then if we fix (say) the first coordinate to $0$, then the vertices of the polytopes
%of the solution space are bounded by $Mn$.
%\end{lemma}
%\begin{proof}
%Note that the in the vertices of polytopes the star graph is connected and from this the upper bound can be easily deduced.
%\end{proof}

\begin{lemma} \label{lemma.size_bound}
If we are given a tropical linear system $A$ over $n$ variables the entries of which are nonnegative and  of value at most $M$,
then the maximal dimension of the tropical prevariety is achieved at some point
with all finite coordinates at most $(M+1)n$.
\end{lemma}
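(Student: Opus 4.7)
The plan is to combine the combinatorial characterization of local dimension (Theorem~\ref{thm.dim_char}) with the standard bound on feasible integer solutions of difference-constraint systems. First, fix any solution $x^*$ of the tropical prevariety at which the maximum local (projective) dimension $d$ is attained; by Theorem~\ref{thm.dim_char} applied to the shifted matrix $A_{x^*}$, its star table $A_{x^*}^{\ast}$ admits a block triangular form $(R_i, C_i)_{i=1}^{d+1}$. For each row $l \in R_i$ pick two distinct witness columns $j_1(l), j_2(l) \in C_i$ on which the row minimum of $A_{x^*}^{\ast}$ is attained.

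Next I would transcribe this block-triangular structure into an explicit system of integer difference constraints on an unknown $y \in \bb{Z}^n$. For each $l \in R_i$ the requirement that $j_1(l), j_2(l)$ be two stars in $C_i$ of $A_y^{\ast}$ becomes the equality
\[
y_{j_1(l)} - y_{j_2(l)} = a_{l j_2(l)} - a_{l j_1(l)};
\]
the requirement that the row minimum be attained in $C_i$ gives $y_{j_1(l)} - y_j \leq a_{lj} - a_{l j_1(l)}$ for every $j \in C_1 \cup \cdots \cup C_i$; and the requirement that no star lie in a later block gives the strict analogues, which over $\bb{Z}$ become $y_{j_1(l)} - y_j \leq a_{lj} - a_{l j_1(l)} - 1$ for $j \in C_{i+1} \cup \cdots \cup C_{d+1}$. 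All right-hand sides are integers of absolute value at most $M+1$, and the system is feasible because $x^*$ satisfies it. By the standard Bellman-Ford argument for difference constraints (view each inequality $y_i - y_j \leq c$ as an edge $j \to i$ of weight $c$, adjoin a source with zero-weight edges to every variable, and take shortest-path distances as a feasible solution), a feasible system on $n$ variables with integer weights in $[-(M+1), M+1]$ admits an integer solution whose coordinates span a range of length at most $(n-1)(M+1)$. Since the tropical prevariety is invariant under translation by $c\vec{1}$, shifting this solution produces a $y \in \bb{Z}^n$ with $0 \leq y_j \leq (M+1)n$ for every $j$.

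Finally, direct inspection shows that $y$ still witnesses the same partition as a block triangular form of $A_y^{\ast}$: the equalities supply at least two stars in $C_i$ for each $l \in R_i$, while the strict inequalities forbid any star in $C_{i+1} \cup \cdots \cup C_{d+1}$. In particular $y$ is itself a solution of the tropical system, and Theorem~\ref{thm.dim_char} applied at $y$ gives local dimension at least $d$ there, so the maximum is attained at a point with every coordinate bounded by $(M+1)n$. The main obstacle is packaging the "no star in later block" strict conditions into clean integer difference constraints and ensuring the shortest-path bound applies uniformly to every variable; the former is handled by the $-1$ offset used above, and the latter by the dummy source that connects the constraint graph.
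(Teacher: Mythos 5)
Your overall route is genuinely different from the paper's: the paper works with the star-graph of the table $A_{x^*}^{\ast}$ (stars adjacent when they share a row or column), observes that consecutive coordinates along a path in this graph differ by at most $M$, and handles disconnected components by sliding each component until a new ``circle'' entry (exceeding the row minimum by exactly $1$) ties it to the rest, which leaves the star table, and hence the local dimension, unchanged. Your plan of transcribing the block-triangular witness into a difference-constraint system and invoking the Bellman--Ford potential is a legitimate alternative and would deliver the same bound. However, there is a genuine gap in your feasibility step. You justify feasibility of the decremented system ($y_{j_1(l)} - y_j \le a_{lj} - a_{l\,j_1(l)} - 1$ for $j$ in later blocks) by saying that $x^*$ satisfies it. That holds only if the differences $x^*_{j_1(l)} - x^*_j$ are integers; but $x^*$ is merely some point of the prevariety where the maximal local dimension is attained, and nothing guarantees integrality --- the exact star pattern at such a point can force non-integral coordinates (a pattern can impose, say, $0 < y_1 - y_2 < 1$). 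For real $x^*$, a strict inequality $u < c$ with integer $c$ does not give $u \le c - 1$, and real feasibility of the strict system does not imply feasibility of the decremented one: a cycle in the constraint graph using $k \ge 2$ strict edges with total weight $1$ is consistent with a real solution of the strict system yet becomes a negative cycle after the $-1$ offsets. So the shortest-path argument may be applied to an infeasible system, and the proof as written does not go through.

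The repair is minor and keeps your architecture. Run Bellman--Ford on the system with every constraint taken non-strictly (no $-1$ offsets); this is feasible because $x^*$ satisfies it, and yields an integer potential $d$ with range at most $(n-1)(M+1)$. Then set $y = (1-\lambda)d + \lambda x^*$ for small $\lambda > 0$: convexity preserves the non-strict constraints, and since $x^*$ satisfies the strict ones strictly, so does $y$ for every $\lambda \in (0,1]$. Taking $\lambda$ small enough keeps the range below $(M+1)n$, and Theorem~\ref{thm.dim_char} applied at $y$ finishes the argument. The witness obtained this way is only rational, but that is all the lemma claims, and the paper's sliding argument has the same feature.
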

\begin{proof}
We have seen in Theorem~\ref{thm.dim_char} that the dimension of the tropical prevariety
in a given point depends only on the star-table in this point.
Given a star-table we consider a graph whose nodes are stars in the table
and two stars are connected if they are in the same column or in the same row.
We call this graph by star-graph.
We say that two columns of the table are connected if there are two stars in these columns
for which there is a path in star-graph between them.
Note that if there is a path there is always a path of length at most $2n$ ($n$ row-steps and $n$ column-steps).
If all columns are connected, then for each pair of solution's coordinates
there is a path in a star graph of length at most $2n$ connecting these two columns.
It is not hard to see that for each consecutive solution coordinates in this path their difference is at most $M$.

If not all columns are connected then there are several connected components.
We take one of them and reduce all coordinates in this component of the solution by the same number until new star appears in this set of columns.
It is easy to see that this star connects two different components. After that we increase all the coordinates we have just reduced by $1$.
Then on the place of a new star we have an entry which is by $1$ larger than the star-entries in the same row. Instead of star we put symbol $\circ$
in this entry. And from now on consider star-circle-graph.
Thus reducing components one by one and introducing new $\circ$-entries we get a connected graph.
Applying the argument for connected graphs we obtain the desired $(M+1)n$ upper bound.
\end{proof}

\begin{lemma} \label{lemma.dim_in_np}
$\tropdim \in \NP$ and $\tropdim_{\infty} \in \NP$.
\end{lemma}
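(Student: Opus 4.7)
The plan is to use Theorem~\ref{thm.dim_char} to produce a short certificate. For $\tropdim$, a natural witness consists of (i) a candidate point $x \in \bb{Z}^n$ at which the local dimension is claimed to be at least $k$, and (ii) a block triangular form of $A_x$ of size $k+1$ (that is, partitions $R_1, \ldots, R_{k+1}$ of the rows and $C_1, \ldots, C_{k+1}$ of the columns satisfying the two defining conditions). Given such a witness, the verifier first checks in polynomial time that $x$ is a genuine solution of $A$ by computing $A_x^{\ast}$ and confirming that every row has at least two stars, and then checks directly that the guessed partitions satisfy both conditions of the block triangular form. By Theorem~\ref{thm.dim_char}, the existence of such a form witnesses that the local projective dimension at $x$ is at least $k$, so the affine dimension there is at least $k$ (after normalization we may add the all-ones direction).

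The only nontrivial point is bounding the bit-length of the witness $x$. First, by the reductions described after Lemma~\ref{lemma.bezem} --- namely, translating rows so that all entries of $A$ become nonnegative and shifting $x$ by a multiple of $\vec{1}$ --- we may assume $A$ has nonnegative entries bounded by $M$ and that we seek $x$ with nonnegative coordinates. Then Lemma~\ref{lemma.size_bound} tells us that if some point in the tropical prevariety realizes the maximum dimension, so does some point with all coordinates bounded by $(M+1)n$, which has bit-length polynomial in the input size $\Omega(mn\log M)$. Hence $x$ can indeed be guessed in polynomial size.

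For $\tropdim_{\infty}$, the witness must also record which coordinates of $x$ are $+\infty$; guess a subset $I \subseteq [n]$ of infinite coordinates and a vector $\widetilde{x} \in \bb{Z}^{[n]\setminus I}$ of finite coordinates. As observed after the proof of Theorem~\ref{thm.dim_char}, the local dimension at such an $x$ equals the dimension, over the finite coordinates, of the tropical prevariety of the submatrix $\widetilde{A}_x$ obtained by deleting the columns in $I$ and the rows that become vacuous; moreover, replacing the remaining infinite entries of the matrix by a sufficiently large finite number does not affect either the local dimension at $\widetilde{x}$ or the maximal block triangular forms. So we reduce to the finite case on $\widetilde{A}_x$ with a suitable large replacement constant, apply Lemma~\ref{lemma.size_bound} to bound $\widetilde{x}$, and guess a block triangular form of $\widetilde{A}_x$ of size $k+1$.

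The only potential obstacle is double-checking that the guessed matrix really does certify dimension at least $k$ in the $\bb{Z}_\infty$ convention adopted in the preliminaries (where solutions consisting only of infinities are forbidden, and where the dimension at a partially infinite point is measured only on the finite coordinates); this is handled by the case analysis already present around Theorem~\ref{thm.dim_char} and requires no new ingredient. Putting these pieces together yields the polynomial-time verifier and hence $\tropdim, \tropdim_{\infty} \in \NP$.
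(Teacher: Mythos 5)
Your proposal is correct and follows essentially the same route as the paper: the certificate is a solution point $x$ (with coordinates bounded via Lemma~\ref{lemma.size_bound}) together with a block triangular form of $A_x$ of size $k+1$, verified in polynomial time and justified by Theorem~\ref{thm.dim_char}, with the $\bb{Z}_\infty$ case handled by restricting to the finite coordinates exactly as in the remarks following that theorem. The paper states this more tersely, but there is no substantive difference.
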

\begin{proof}
As a certificate of an inequality $\dim A \geq k$ one can take a solution $x$
at which the local dimension is at least $k$, together with a block triangular form of $A_x = \{a_{ij} + x_{j}\}_{i,j}$
of size at least $k+1$ (see Theorem~\ref{thm.dim_char}).
By Lemma~\ref{lemma.size_bound} there is a solution needed with small enough coordinates.
It is easy to check in polynomial time that the given vector is a solution and that the given row and column partitions
indeed give a block triangular form of needed size.

The same proof works for $\tropdim_{\infty}$.
\end{proof}

To prove $\NP$-completeness we will give a reduction of $\vertex$ problem to our problem.

\begin{definition}
$\vertex$: given an undirected graph $G$ and a natural number $k$ decide whether there is a vertex cover of size at most $k$ in $G$,
that is whether there is a subset $K$ of vertices of $G$ of size at most $k$ such that each edge of $G$ has at least one end in $K$.
\end{definition}

Let $n$ be the number of vertices in $G$ and $m$ be the number of edges in $G$.
We make the following additional assumptions on $G$ and $k$:
\begin{enumerate}
\item $G$ is connected;
\item $k \leq 2n/3$.
\end{enumerate}
With these additional assumptions Vertex Cover problem is still $\NP$-complete (this follows from the standard proof of its completeness~\cite{garey_johnson79book}).

\begin{theorem} \label{thm.dim_completeness}
$\tropdim$ is $\NP$-complete.
\end{theorem}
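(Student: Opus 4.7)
The plan is to establish $\NP$-hardness via reduction from $\vertex$, using the combinatorial characterization of Theorem~\ref{thm.dim_char}; membership in $\NP$ is already Lemma~\ref{lemma.dim_in_np}. Given $(G = (V, E), k)$ with $|V| = n$ under the stated restrictions of the excerpt, I would construct a matrix $A$ whose columns are indexed by $V \cup \{S\}$ for a single fresh ``cover marker'' column $S$ and whose rows are indexed by the edges of $G$: the row corresponding to $e = (u, v)$ has entry $0$ at columns $u$, $v$, $S$ and a large entry $L = n + 1$ elsewhere. The reduction outputs $(A, n - k)$, and the main claim to prove is that the dimension of the tropical prevariety of $A$ equals $n - \tau(G)$, where $\tau(G)$ is the minimum vertex cover number of $G$; then $(G, k) \in \vertex$ iff $\tau(G) \leq k$ iff $(A, n - k) \in \tropdim$.

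For the inequality $\dim \geq n - \tau(G)$, I would exhibit a block triangular form of $A^{*}_{0}$ at $x^{\ast} = 0$. At this point every edge row has stars exactly at $\{u, v, S\}$. Given a minimum vertex cover $K$, place each $v \in V \setminus K$ in its own singleton block with empty $R_i$ and put $K \cup \{S\}$ as the last block $C_{n - \tau + 1}$ with every edge row inside $R_{n - \tau + 1}$. The vertex-cover property guarantees that for every edge row both $S$ and at least one endpoint lie in the last block, supplying the two required stars, while condition (2) of the block triangular form is vacuous for the last block. Theorem~\ref{thm.dim_char} then yields local projective dimension at least $n - \tau(G)$ at $x^{\ast} = 0$.

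The converse inequality is the core step. First I would verify, using connectedness of $G$, that for any solution $x^{\ast}$ the stars in each edge row stay inside $\{u, v, S\}$: a stray star at $w \notin \{u, v, S\}$ would drive $x^{\ast}_w$ so low that, propagated through edges incident to $w$ (which exist by connectedness) and then back to $u$, it would force $x^{\ast}_u$ to take two incompatible values. Consider any block triangular form of $A^{\ast}_{x^{\ast}}$ of size $d + 1$ with the singleton empty-$R$ blocks $C_1, \ldots, C_s$ placed first (as per the observation before Theorem~\ref{thm.dim_char}), and write $V_l = V \cap C_{s+l}$, $V_0 = V \cap (C_1 \cup \cdots \cup C_s)$, $t = d + 1 - s$. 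In the main case, when $S$ lies in some non-singleton block $C_{s + t^{\ast}}$, the two block-triangular-form conditions combined with the star restriction force: every edge placed in $R_{s+l}$ with $l \neq t^{\ast}$ has both endpoints in $V_l$; every edge placed in $R_{s + t^{\ast}}$ has at least one endpoint in $V_{t^{\ast}}$; no edge crosses two distinct blocks $V_l, V_m$ both different from $t^{\ast}$; and every edge incident to $V_0$ has its other endpoint in $V_{t^{\ast}}$. Hence $V_{t^{\ast}} \cup \bigcup_{l \neq t^{\ast}} K_l$, with $K_l$ a minimum vertex cover of the induced subgraph $G[V_l]$, is a vertex cover of $G$, and the bound $\tau(G[V_l]) \leq |V_l| - 1$ telescopes its size to at most $(n - s) - (t - 1) = n - d$, whence $\tau(G) \leq n - d$. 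In the remaining case ($S$ in a singleton), the star restriction forces every edge row to have both endpoints in the same $V_l$, and connectedness of $G$ collapses the situation to $d = 1$, where $d \leq n - \tau(G)$ is automatic for $n \geq 2$.

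The hardest step will be this upper bound when there are several non-singleton blocks. The naive vertex cover $V \cap \bigcup_{\text{non-sing}} C_j$ has size roughly $n - d + (t - 1)$, exceeding the target $n - d$ by $t - 1$; the tighter cover which keeps $V_{t^{\ast}}$ entirely but replaces each other $V_l$ by a minimum cover of $G[V_l]$ recoups exactly $t - 1$ through the bound $\tau(G[V_l]) \leq |V_l| - 1$, and for it to still cover $G$ one must rule out inter-block edges between $V_l$ and $V_m$ for $l, m \neq t^{\ast}$. This ruling-out is the most delicate point of the proof and is where both conditions of the block triangular form and the confinement of stars to $\{u, v, S\}$ are used together.
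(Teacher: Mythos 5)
Your construction is the paper's (their ``marker'' column is just the all-zeros first column, and their large entry is $1$ rather than $n+1$), and your lower bound is identical, but your upper bound takes a genuinely different route. The paper splits the hard direction in two: first it shows that the zero solution maximizes the local dimension (normalizing $x_1=0$, proving via connectedness that all coordinates may be taken nonnegative, and observing that positive coordinates only delete stars, so block triangular forms can only shrink); then it analyzes block triangular forms at the zero solution, using connectedness \emph{and} the assumption $k \leq 2n/3$ to force the rigid structure ``all $R_i$ empty except the one containing the marker column,'' from which the vertex cover is read off. You instead bound the size of an \emph{arbitrary} block triangular form at an \emph{arbitrary} solution in one shot: confining stars of the row of $e=(u,v)$ to $\{u,v,S\}$, you extract the cover $V_{t^{\ast}} \cup \bigcup_{l \neq t^{\ast}} K_l$ and the telescoping count $\sum_l (|C_{s+l}|-1) = n-d$ gives $\tau(G) \leq n-d$ directly. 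This is arguably cleaner: it dispenses with the separate ``zero is optimal'' lemma and with the $k \leq 2n/3$ hypothesis (your degenerate case $d=1$ is absorbed by $\tau(G)\leq n-1$), and note that in fact only condition (1) of the block triangular form plus the star confinement is needed for the count, so the inter-block claims you worry about in your last paragraph are consequences rather than obligations. The one step you should tighten is the star-confinement lemma itself: the argument ``a low $x_w$ propagates back to $u$ and forces two incompatible values'' is not quite right as stated. The correct argument (which is the paper's, transplanted) is: a star at a non-incident column $w$ forces $\min_y x_y \leq x_S - L$; letting $W$ be the set of vertices attaining $\alpha = \min_y x_y$, either $W \neq V$ and connectedness yields an edge leaving $W$ whose row has a \emph{unique} minimum (since non-incident entries are $\geq L+\alpha > \alpha$ and the $S$-entry and the other endpoint exceed $\alpha$), contradicting solvability, or $W = V$ and every row's stars sit exactly at its two endpoints. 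With that fix your proof is complete and correct.
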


\begin{proof}
Given a fixed graph $G$ we will construct the matrix $A$ of tropical linear system.
The matrix $A$ will have $(n+1)$ columns, $m$ rows and all its entries will be $0$ or $1$,
that is $A \in \zo^{m \times (n+1)}$.
Zero vector will be a solution of the tropical system $A$ and the global dimension will be attained on this solution.

Now we construct the matrix $A$. The first column of $A$ consists of zeros (and thus the first column of $A^{\ast}$
consists of stars).
All other columns are labeled by vertices of $G$ and rows are labeled by edges of $G$.
To the entry $(v,e)$ we put $0$ if and only if $v$ is one of the endpoints of $e$.
In particular, this means that every row of $A$ contains exactly $3$ zeros and one of them is in the first column.

\[
\begin{blockarray}{*{13}{p{3pt}}}
    &          &      &           &       & $v$      &           &          &      & $u$          &     &          &  \\
\begin{block}{p{3pt}(p{3pt}|*{3}{p{3pt}}|p{3pt}|*{3}{p{3pt}}|p{3pt}|*{3}{p{3pt}})}%*{3}{p{4pt}}|*{3}{p{5pt}}|*{3}{p{5pt}})}
    & 0        &      &           &       &      &           &          &      &           &     &          &  \\
    & 0        &      &           &       &      &           &          &      &           &     &          &  \\
    & $\vdots$ &      &           &       &      &           &          &      &           &     &          &  \\
$e$ & 0        & 1    & $\ldots$  & 1     & 0    &   1       & $\ldots$ &  1   &  0        &  1  & $\ldots$ & 1 \\
    & $\vdots$ &      &           &       &      &           &          &      &           &     &          &  \\
    & 0        &      &           &       &      &           &          &      &           &     &          &  \\
\end{block}
\end{blockarray}
\]

Now let us consider the zero solution to the tropical system $A$.
We are going to prove that the local dimension of the solution space in this solution is at least $n- k$
if and only if $G$ has a vertex cover of size $k$.
Here we consider projective dimension.

First consider the vertex cover $V_1 \subseteq V$ of the graph $G$.
Consider the set of columns $V_1$ in $A$ and add the first column to it.
It is not hard to see that this set of columns contains at least two zeros in any row:
the one in the first column and the other one in $V_1$, since $V_1$ is a vertex cover.
Thus all other columns lies in the prevariety and the codimension is at least $n-k$.

Now suppose that the dimension of the tropical prevariety is $n-d$.
Thus there is a block-triangular form of $A$ of size $n-d+1$ (see Theorem~\ref{thm.dim_char}).
Note that if the first column is in the set $C_i$
%such that $R_{i}$ is nonempty.
then for all other sets $C_{j}$ the sets $R_{j}$ are empty,
that is we can assume $i=n-d+1$.
Indeed, if this is not true, it is easy to see that there are no stars below the main diagonal blocks
except for the first column. Thus if there are two sets except $C_i$ such that their row-sets
are nonempty, we have a contradiction with the connectedness of $G$, and if there is only
one such set except $C_i$, it should contain all columns except the first one and the
size of this block triangular form is $2$, but we know that there is larger block triangular form
(recall that the size of vertex cover is at most $2n/3$).
%any set of vertices $C_{j}$ with $j>i$ will be disconnected from $C_{i}$
%and this contradicts our assumption that the graph is connected.

Thus we have that the block triangular form has the following structure.
$R_{1}, \ldots, R_{n-d}$ are empty, $|C_{1}|=\ldots=|C_{n-d}|=1$ and thus
$|C_{n-d+1}|=d+1$ and $R_{n-d+1}=\{1, \ldots, m\}$.
%It is easy to see that if the first column is not in $C_{n-d+1}$, then $C_{n-d+1}$
%contains all other columns and thus $n-d=1$ and $d=n-1$.
Also the first column
is in $C_{n-d+1}$. It is easy to see that the set of all other columns in $C_{n-d+1}$
forms a vertex cover and thus $k \leq d$.

Now it is only left to show that the zero solution of the system $A$
achieves the maximal dimension in the prevariety.
Consider any solution $x$ of the system~\eqref{eq.tropical}.
Since we are in the projective
tropical space we can assume that $x_1=0$.
This means that the first column of the matrix
\begin{equation} \label{eq.other_solution}
B = \{a_{ij} + x_{j}\}_{i,j}
\end{equation}
is the same as for the zero solution.

\begin{claim}
For all $j = 1, \ldots, n$ we can assume that $x_{j} \geq 0$.
\end{claim}
\begin{proof}[Proof of the claim]
Assume on the contrary that $\alpha = \min_{j} x_j < 0$.
Let $C_{1} = \{j \mid x_{j} = \alpha\}$.
The set of columns $C_1$ corresponds to some set $V_{1} \subseteq V$ of vertices of the graph $G$
(note that $1 \notin C_{1}$).
There are two cases.

\paragraph{Case 1.} $V_1 \neq V$.
Since $G$ is connected there is an edge $e$ with one end in $V_1$ and the other end in $V \setminus V_1$.
Consider the row of the matrix~\eqref{eq.other_solution} corresponding to $e$.
It is clear that in one entry in this row we have $\alpha$ and in all others we have numbers greater than $\alpha$.
Thus this row in the table $B^{\ast}$ contain only one star and we have a contradiction.

\paragraph{Case 2.} $V_{1} = V$.
Then to obtain $B$ we have decreased all columns of $A$ by the same integer.
Thus there are exactly two stars in each row of $B^{\ast}$. And since the graph is connected
the maximal triangular form in this case has size two:
the first column with empty set of rows and all other columns with all rows.
Thus the dimension in this point of the prevariety is only $1$ which is less than for the zero solution.
\end{proof}

Now consider some column $j$ such that $x_{j} > 0$. It is not hard to see that all entries of the matrix $B$
in this column are greater than zero. And since the first column consists of zeros we have that in the column $j$
in $B^{\ast}$ there are no stars. Thus it is easy to describe how the table $B^{\ast}$ differs from $A^{\ast}$:
we just remove all stars in $A^{\ast}$ from the columns $j$ such that $x_{j} > 0$.
It is only left to show that the size of the largest triangular form for $A$ is at least the size of the largest
triangular form for $B$. For this consider the largest triangular form for $B$. Note that each column $j$
such that $x_{j} > 0$ should constitute the separate set $C_i$ with the empty set $R_{i}$ and note
that we can assume that all these sets are in the beginning of the list of $C_i$'s.
Consider the same system of $C_i$'s and $R_{i}$'s for the matrix $A$. It is easy to see
that this system is a triangular form for this matrix also.
Thus the maximal size of the triangular form for $A$ can be only greater than for $B$
and thus the dimension of the prevariety attains its maximum on the zero solution.
\end{proof}

As a corollary we have the following result.

\begin{corollary} \label{cor.dim_complete}
$\tropdim_{\infty}$ is $\NP$-complete.
\end{corollary}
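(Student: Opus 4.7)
The plan is to derive this corollary almost immediately by combining three results already established in the paper. Membership in $\NP$ is handled by Lemma~\ref{lemma.dim_in_np}, whose proof of $\tropdim \in \NP$ was explicitly remarked to carry over to $\tropdim_{\infty}$: a certificate consists of a solution $x$ (with small enough coordinates by Lemma~\ref{lemma.size_bound}, applied to the finite part) together with a block triangular form of $\widetilde{A}_{x}$ of the required size, and the correctness of such a certificate can be verified in polynomial time using the adapted combinatorial characterization for the $\bb{Z}_{\infty}$ case discussed after Theorem~\ref{thm.dim_char}.

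For $\NP$-hardness, the plan is simply to invoke the chain
\[
\vertex \leq_m \tropdim \leq_m \tropdim_{\infty},
\]
where the first reduction is Theorem~\ref{thm.dim_completeness} and the second is item~(3) of Lemma~\ref{lemma.reduct_simple}. Since $\vertex$ is $\NP$-hard, so is $\tropdim_{\infty}$.

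There is essentially no obstacle here: both ingredients are already proved. The only thing one might want to double-check is that the reduction of Lemma~\ref{lemma.reduct_simple}(3) is many-one (not just Turing), so that combined with $\NP$ membership we obtain $\NP$-completeness in the usual sense; this is clear since the reduction there is the identity map on matrices (we merely reinterpret an integer matrix over $\bb{Z}_{\infty}$), and it preserves the threshold $k$ because finite-coordinate local dimension at a finite point coincides with the ordinary local dimension.
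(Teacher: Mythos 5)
Your proposal is correct and follows exactly the paper's own argument: membership via Lemma~\ref{lemma.dim_in_np} and hardness by composing Theorem~\ref{thm.dim_completeness} with the reduction of Lemma~\ref{lemma.reduct_simple}(3). The extra observation that this reduction is the identity map (hence many-one) is a fair and accurate clarification of what the paper leaves implicit.
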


\begin{proof}
The containment in $\NP$ was already proven in Lemma~\ref{lemma.dim_in_np}.
The completeness follows since there is a simple reduction from $\tropdim$ to $\tropdim_{\infty}$
given by Lemma~\ref{lemma.reduct_simple}.
\end{proof}

The results of this section easily generalizes to the case of min-plus linear system.
\begin{theorem}
Given a min-plus linear system and a natural number $k$, the problem of deciding whether the solution
space of the system is at least $k$ is $\NP$-complete.
\end{theorem}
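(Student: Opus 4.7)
The plan is to handle containment in $\NP$ and hardness separately, leveraging the machinery already built in Sections~\ref{sec.characterization} and~\ref{sec.completeness}.

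For the $\NP$ upper bound, I would use the adaptation of Theorem~\ref{thm.dim_char} to min-plus linear systems $A \tp x = B \tp x$ described at the end of Section~\ref{sec.characterization}: the local projective dimension at a solution $x$ equals one less than the maximum size of a block triangular form of the joint star-table $D_x^{\ast}$, where $D = (A \mid B)$. A certificate for ``$\dim \geq k$'' would therefore consist of a solution $x$ together with the row/column partitions realizing a block triangular form of size at least $k+1$ of $D_x^{\ast}$; verification is straightforward in polynomial time. The only nontrivial ingredient is a size bound on $x$ analogous to Lemma~\ref{lemma.size_bound}, to ensure the witness has polynomial bit-length. I would repeat the star-graph argument of that lemma almost verbatim on $D_x^{\ast}$: starting from some solution realizing the maximum local dimension, connect stars in the same row or column of $D^{\ast}$, and iteratively shift disconnected components (introducing auxiliary $\circ$-entries) until the star-circle-graph is connected. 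Once connected, each pair of coordinates differs by at most $M$ along a path of length at most $2n$, giving a bound $(M+1)n$.

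For $\NP$-hardness, I would reduce from $\tropdim$, which is $\NP$-complete by Theorem~\ref{thm.dim_completeness}. Given a tropical linear system $A$, first apply Lemma~\ref{lemma.inequalities} to obtain a system of min-plus inequalities over the same variables and with the same solution set. Next, convert each min-plus inequality $\min_j\{a_{ij}+x_j\} \leq \min_j\{b_{ij}+x_j\}$ into an equivalent min-plus equation by using the identity $\alpha \leq \beta \iff \alpha = \min(\alpha,\beta)$. Concretely, since
\begin{equation*}
\min\bigl(\min_j\{a_{ij}+x_j\},\min_j\{b_{ij}+x_j\}\bigr) = \min_j\{\min(a_{ij},b_{ij}) + x_j\},
\end{equation*}
the inequality is equivalent to the min-plus equation
\begin{equation*}
\min_j\{a_{ij}+x_j\} = \min_j\{\min(a_{ij},b_{ij})+x_j\}.
\end{equation*}
Both transformations preserve the solution set, hence they preserve the full geometry of the prevariety and in particular its dimension at every point. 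Thus the dimension of the resulting min-plus prevariety is at least $k$ if and only if the dimension of the original tropical prevariety is at least $k$, and the reduction is clearly polynomial-time.

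The part I expect to be least routine is the size bound for the certificate in the $\NP$-membership step: the block triangular form for min-plus systems requires each row of $R_i$ to contribute a star in \emph{both} the $A$-part and the $B$-part of $C_i$, and one has to check that the star-graph connectivity argument of Lemma~\ref{lemma.size_bound} still forces the same $(M+1)n$ bound on the coordinates of some optimal-dimension witness. Everything else is direct adaptation of the tropical case or a solution-preserving rewriting, and analogous arguments also handle the problems $\tropdim_{\infty}$ for min-plus systems (via the infinity-coordinate handling noted after Theorem~\ref{thm.dim_char}) and min-plus systems of inequalities.
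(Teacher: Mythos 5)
Your proposal is correct, and the $\NP$-membership half matches the paper's (adapt Theorem~\ref{thm.dim_char} and Lemma~\ref{lemma.size_bound} to the joint table $D^{\ast}$ of $D=(A\mid B)$ and certify with a bounded solution plus a block triangular form). The hardness half, however, takes a genuinely different route. The paper reduces directly from $\vertex$: it reuses the matrix $A=(a_0,A^{\prime})$ from Theorem~\ref{thm.dim_completeness} and builds the bespoke min-plus system $(a_0+\vec{1},A^{\prime})\tp x=(a_0,A^{\prime}+I)\tp x$, then reruns the whole block-triangular-form analysis (including the argument that no solution has negative coordinates and that the zero solution maximizes the dimension). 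You instead reduce from $\tropdim$ itself, composing Lemma~\ref{lemma.inequalities} (tropical system $\to$ min-plus inequalities with the \emph{same} solution set) with the standard rewriting $\alpha\leq\beta\iff\alpha=\min(\alpha,\beta)$, which turns each inequality into the equation $\min_j\{a_{ij}+x_j\}=\min_j\{\min(a_{ij},b_{ij})+x_j\}$ without changing the solution set; since the prevariety is preserved pointwise, so is its dimension, and the parameter $k$ carries over unchanged. Both arguments are sound. Your version is more modular and avoids redoing the Vertex Cover analysis; it is essentially the same trick the paper itself uses for the min-plus \emph{inequality} corollary (whose hardness is derived from Lemma~\ref{lemma.inequalities}), pushed one step further to equations. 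The paper's direct construction buys a concrete, self-contained instance and makes the combinatorics of the min-plus block triangular form explicit, which is useful since that notion (one star required in each of the $A$- and $B$-parts) differs from the tropical one; your approach sidesteps that by never needing to analyze the min-plus form at the zero solution, only to verify it as part of the $\NP$ certificate.
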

Indeed, the analogs of Lemmas~\ref{lemma.size_bound} and~\ref{lemma.dim_in_np} can be proven in the same way.
To give a reduction from $\vertex$ consider the same matrix $A$ from the proof of Theorem~\ref{thm.dim_completeness}
and denote $A = \left(
                  \begin{array}{cc}
                    a_0, & A^{\prime} \\
                  \end{array}
                \right)
$, where $a_0$ is the first column of $A$.
Consider the min-plus system
$$\left(
                  \begin{array}{cc}
                    a_0 + \vec{1}, & A^{\prime} \\
                  \end{array}
                \right) \tp x = \left(
                            \begin{array}{cc}
                              a_0, & A^{\prime} + I \\
                            \end{array}
                          \right) \tp x,
                $$ where $I$ is the matrix of the corresponding size consisting of ones.
%with the left side consisting of the first column of $A$ and the right side consisting of
%all other columns of $A$.
Then following the lines of the proof of Theorem~\ref{thm.dim_completeness}
it is easy to see that the size of the maximal block triangular form for this system in zero solution is equal to $n-k+1$ where $k$
is the size of minimal vertex cover, and that the size of block triangular form in all other solutions is at most the size of the block triangular form in zero
solution.
The proof of the first part is almost the same. For the second part we again consider arbitrary solution and
assume that $x_1 = 0$. If there is a negative coordinate in the solution, then we consider the smallest coordinate.
It is easy to see that there is a row such that there is a minimum in the corresponding column in the left-hand side
and there is no equal value in the right-hand side. Thus there are no negative coordinates in $x$ and the proof proceeds as before.
The analog of Corollary~\ref{cor.dim_complete} can be also proven in the same way.

Finally, we get the analogous result for min-plus systems of inequalities.
\begin{corollary}
Given a min-plus system of inequalities and a natural $k$, the problem
of deciding whether the solution space has dimension at least $k$ is $\NP$-complete (both over $\bb{Z}$ and $\bb{Z}_{\infty}$).
\end{corollary}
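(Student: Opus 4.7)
The plan is to prove containment in $\NP$ and $\NP$-hardness, paralleling the min-plus equation case from the preceding theorem.

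For $\NP$-membership I would adapt Lemmas~\ref{lemma.size_bound} and~\ref{lemma.dim_in_np} to the inequality setting. The combinatorial characterization described at the end of Section~\ref{sec.characterization} expresses the local dimension at a solution $x$ in terms of the star-table of the joint matrix $D_x = \left(\, A + x \ \big|\ B + x \,\right)$. The proof of the coordinate-size bound in Lemma~\ref{lemma.size_bound} depends only on the star-graph structure and not on the specific (equational or inequality) semantics of stars, so it carries over verbatim. The $\NP$ certificate is then a solution $x$ with coordinates bounded by a polynomial, together with a block-triangular form of size $k+1$ for $D_x^*$ in the inequality sense; both conditions can be checked in polynomial time.

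For $\NP$-hardness I would reduce from the min-plus equations dimension problem, which was just shown $\NP$-complete. Given a min-plus equation system $A \tp x = B \tp x$ with rows $A_i, B_i$, I rewrite each equation $A_i \tp x = B_i \tp x$ as the pair of min-plus inequalities $A_i \tp x \leq B_i \tp x$ and $B_i \tp x \leq A_i \tp x$. The resulting inequality system has exactly the same solution set, since $f = g$ is equivalent to $f \leq g \wedge g \leq f$; hence the solution prevariety, and in particular its dimension, is unchanged. The reduction is clearly polynomial-time and keeps the threshold $k$ fixed. The $\bb{Z}_{\infty}$ case is handled by the same rewrite applied to the $\bb{Z}_{\infty}$ analog of the equations dimension problem, whose $\NP$-hardness was noted right after the min-plus equations theorem.

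The main obstacle is in verifying that both Theorem~\ref{thm.dim_char} and Lemma~\ref{lemma.size_bound} adapt to the weaker inequality block-triangular condition, which demands only the implication ``star in the $B$-part of $C_i$ implies a star in the $A$-part of $C_i$'' instead of stars in both sides. The crux is that in these arguments stars are used only to identify which coordinates can be independently perturbed without losing feasibility; the inequality semantics relaxes the feasibility constraint but does not affect the star-perturbation bookkeeping itself. Once this is checked, both ingredients plug into the proof template of the preceding theorem and the corollary follows.
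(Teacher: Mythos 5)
Your proposal is correct, and the membership half is exactly what the paper does (it simply asserts that the containment in $\NP$ ``can be proven in the same way'', i.e.\ by adapting Lemma~\ref{lemma.size_bound} and Lemma~\ref{lemma.dim_in_np} to the inequality version of the block triangular form, as you describe). The hardness half, however, takes a genuinely different route. The paper reduces from $\tropdim$ (the \emph{tropical} dimension problem) by invoking Lemma~\ref{lemma.inequalities}, which converts a tropical linear system into a system of min-plus inequalities over the same variables with exactly the same solution set, so the dimension is preserved. You instead reduce from the min-plus \emph{equations} dimension problem established in the preceding theorem, rewriting each equation $A_i \tp x = B_i \tp x$ as the pair $A_i \tp x \leq B_i \tp x$ and $B_i \tp x \leq A_i \tp x$; this also preserves the solution set verbatim, so the dimension threshold $k$ transfers unchanged. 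Both reductions are solution-set-preserving and polynomial; yours is more elementary in that it needs nothing beyond the trivial equivalence $f = g \Leftrightarrow (f \leq g \wedge g \leq f)$ (a relation the paper itself uses elsewhere, in Lemma~\ref{lemma.min-plus}), while the paper's choice reuses Lemma~\ref{lemma.inequalities} and ties the inequality problem directly back to the tropical case without passing through the min-plus equations theorem. One remark: your worry about whether Theorem~\ref{thm.dim_char} adapts to the weaker inequality semantics is relevant only to $\NP$-membership, not to the hardness reduction, since the latter compares literally identical solution sets and so needs no combinatorial characterization at all.
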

\begin{proof}
The containment in $\NP$ can be proven in the same way.
The completeness follows from Lemma~\ref{lemma.inequalities}.
\end{proof}

{\bf Acknowledgements}. The first author is grateful to Max-Planck Institut f\"ur
Mathematik, Bonn for its hospitality during the work on this paper.

The question on the complexity of equivalence of min-plus linear
prevarieties was posed by Vladimir Voevodsky which encouraged the
authors to study interrelations between tropical and min-plus linear
prevarieties.

{\small
\bibliographystyle{abbrv}
\bibliography{bib/tropical}
}

\newpage

\appendix

\part*{Appendix}

\section{Direct proof of equivalence of $\tropsolv$ and $\tropsolv_{\infty}$}
\label{app.solv_direct}

In this section we prove that the solvability
problem for tropical linear systems over $\bb{Z}_{\infty}$ is equivalent to
the solvability problem for tropical linear systems over $\bb{Z}$.

Throughout this section we denote by $A + x$ the matrix $A$
to each column of which we add the corresponding coordinate of the vector $x$ (in the Preliminaries it was denoted by $A_x$ but this notation is not convenient in this section).

The reduction in one direction was already proven in Lemma~\ref{lemma.reduct_simple}. If the matrix over $\bb{Z}$
has a solution over $\bb{Z}_{\infty}$ then it also has a solution over $\bb Z$:
just substitute all infinities in the solution vector by large enough numbers.

Let us prove the reduction in the other direction.
Suppose we are given tropical linear system $A \in \bb{Z}_{\infty}^{m \times n}$

First of all note, that if in $A$ there is a row of infinities we can remove it without
changing the solvability. If there is a column of infinities in $A$ the system always has a solution.
Thus in what follows we can assume that none of this is the case.

Next, we can assume that all
%Let $A \in \bb{Z}_{\infty}^{m \times n}$ be a matrix of the tropical linear system
non-infinity entries in $A$ are nonnegative and bounded by $M-1$.
It is proven in~\cite{grigoriev10system} that if there is a solution, then there is a solution
the non-infinity coordinates of which are bounded by $Mn$.
Let $\beta$ be (say) $100Mn$, $\alpha$ be $200Mn$ and $\gamma$ be $300Mn$.

For all $i \in [n]$ consider the matrix
$$
A_i =
\left(
  \begin{array}{cc}
    A^{\infty \to \alpha}_{i} & A^{\infty \to \alpha} \\
    B & C \\
  \end{array}
\right)
$$
of size $(m+n-1) \times (n+n-1)$, where $A^{\infty \to \alpha}$ is obtained from $A$
by substituting all infinities by $\alpha$,
$A^{\infty \to \alpha}_{i}$ is obtained from $A^{\infty \to \alpha}$ by removing the $i$th
column. $B$ is the $(n-1) \times (n-1)$ matrix with $-\beta$ on the main diagonal and $\gamma$
everywhere else and $C$ is the matrix with the $i$th column consisting of zeros and all other columns
consisting of $\gamma$.

\begin{claim}
$A$ has a solution iff there exists $i \in [n]$ such that $A_i$ has a solution.
\end{claim}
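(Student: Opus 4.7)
The proof verifies both directions of the iff, exploiting the widely separated parameter magnitudes $\beta = 100Mn < \alpha = 200Mn < \gamma = 300Mn$ together with the polynomial bound on solution coordinates provided by the size-of-solutions lemma cited earlier in the paper.

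\emph{Forward direction.} Given a solution $x \in \bb{Z}_{\infty}^n$ of $A$, I pick $i$ such that $x_i = \infty$ whenever possible; otherwise (when all $x_j$ are finite) $i$ is arbitrary. Set $y_j := Y$ for all $j \neq i$, $z_j := x_j$ when $x_j$ is finite (and $z_j := Y$ when $x_j = \infty$) for $j \neq i$, and $z_i := Y - \beta$, where $Y$ is a suitably large constant (e.g., $Y := \alpha + \beta$ when $x_i = \infty$; $Y := x_i + \beta$ otherwise, so that $z_i = x_i$). Each bottom row $k$ of $A_i$ is satisfied because $y_k - \beta = Y - \beta = z_i$ gives two attainers of the minimum, while every entry of the form $y_j + \gamma$ or $z_j + \gamma$ is larger by at least $\gamma - \beta - O(Mn) > 0$. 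For each top row $l$, the two indices $j_1, j_2$ at which $A$'s $l$-th equation is satisfied at $x$ (which must lie in $[n] \setminus \{i\}$ when $x_i = \infty$, since $a_{l,i} + x_i = \infty$ then) yield two attainers in the $z$-part; the $y$-part contributions $a'_{l,j} + Y$, the $z_i$-contribution $a'_{l,i} + Y - \beta$, and any $\alpha$-entries are larger than this minimum due to $Y, \alpha, \beta \gg M$.

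\emph{Backward direction.} Let $(y, z) \in \bb{Z}^{(n-1)+n}$ solve some $A_i$. Using tropical scaling and the size-bound lemma applied to $A_i$ (whose entries lie in a range of width $O(Mn)$), I assume $y_j, z_j \in [0, O(Mn)]$. Analyze the $n-1$ bottom rows: in row $k$, the entries $y_j + \gamma$ ($j \neq k$) and $z_j + \gamma$ ($j \neq i$) exceed the signature entries $y_k - \beta$ and $z_i$ by at least $\gamma - \beta - O(Mn) > 0$, so the two attainers of the minimum must be $y_k - \beta$ and $z_i$, forcing $y_k - \beta = z_i$; hence all $y_k$ equal a common value $C$ and $z_i = C - \beta$. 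Next, ``clamp'' each $z_j$ ($j \neq i$) to $\min(z_j, C)$: this preserves solvability of $A_i$, since the decreased $z$-part entry in any top row either equals the original or coincides with the unchanged $y$-part entry at the same column, and no bottom-row entry drops below the prevailing minimum. After clamping, $z_j \leq C$ for $j \neq i$. Define $x_j := z_j$ for all $j \in [n]$ and verify that $x$ solves $A$: in each top row $l$, the $z$-part pointwise dominates the $y$-part from below, and any entry originating from an infinity-column (contributing $\alpha + \cdot$) exceeds every finite-column $z$-part entry by at least $\alpha - O(Mn) > 0$; thus the twice-attainment of the top-row minimum of $A_i$ occurs among $z$-part entries at columns with finite $a_{l,j}$, which is precisely the condition for $A$'s $l$-th equation to be satisfied at $x$.

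\emph{Main obstacle.} The backward direction is the delicate one. The clamping step is critical, and it is what rules out the otherwise-problematic configuration in which the top-row minimum is split between a $y$-part attainer and a $z$-part attainer at distinct columns, which would fail to translate to a twice-attained minimum of $A$. The entire argument hinges on the separations $\alpha - \beta$ and $\gamma - \alpha$, each of order $100Mn$, strictly dominating the $O(Mn)$ range of solution coordinates guaranteed by the size-bound lemma; verifying this numerically in each of the case distinctions ($a_{l,i}$ finite vs.\ infinite, $x_i$ finite vs.\ infinite) is the principal computational content.
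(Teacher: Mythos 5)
Both directions of your argument have genuine gaps, and they stem from the same source: the reduction is built around choosing $i$ to be a \emph{finite} coordinate of the solution, whereas you choose $i$ infinite whenever possible and you reconstruct an all-finite solution of $A$. In the forward direction, take a row $l$ of $A$ whose minimum at $x$ is infinite (such rows are automatically satisfied over $\bb{Z}_\infty$), say with $a_{l1}=\infty$, $x_1=0$ finite, and $a_{lj}$ finite, $x_j=\infty$ for all $j\neq 1$. With your assignment ($Y=\alpha+\beta$, $z_1=x_1=0$, $z_j=Y$ for the infinite coordinates, $z_i=\alpha$), the $z$-part entry at column $1$ is $\alpha+0=\alpha$, while every other entry of that row of $A_i$ is at least $\alpha+\beta$ or $a_{li}+\alpha>\alpha$; the minimum is attained once and the row fails. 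The paper avoids this precisely by taking $i$ with $x_i$ finite and setting the placeholder for infinite coordinates to $x_i+\beta<\alpha$, so that placeholder entries (of size about $\beta$) undercut every $\alpha$-substituted entry and occur in matched $y$/$z$ pairs. Also, your appeal to ``the two indices $j_1,j_2$ at which row $l$ is satisfied'' is meaningless for rows with infinite minimum, which is exactly where the difficulty lies.

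In the backward direction the reconstruction $x_j:=z_j$ for all $j$ cannot be correct, because $A$ may have no finite solution at all: e.g.\ for $A=\bigl(\begin{smallmatrix}0&\infty&\infty\\ \infty&0&0\end{smallmatrix}\bigr)$ every solution has $x_1=\infty$, yet some $A_i$ is solvable. Concretely, the minimum of a top row of $A_i$ can be attained once in the $y$-part and once in the $z$-part \emph{at the same column} $j$ (with $z_j=y_j=C$, which survives your clamping); this yields only one attainer in the $z$-part and signals that $x_j$ must be set to $\infty$, which your construction never does. The paper instead grows a set $J$ of columns, starting from the smallest $z$-coordinate and closing under ``reachability through small entries,'' and sets $x_j=\infty$ for $j\notin J$. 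Separately, your claim that the bottom-row minima are forced onto the $y_k-\beta$ and $z_i$ entries does not follow numerically: the solution of $A_i$ is only guaranteed to have coordinates in a window of width about $400Mn$ (the spread of the entries of $A_i$), which exceeds the separations $\gamma-\beta=200Mn$ and $\alpha-\beta=100Mn$; the paper explicitly treats the second case in which a bottom-row minimum sits in a $\gamma$-entry, which forces $z_i\geq\gamma$ and leads to a recovered solution with $x_i=\infty$. Your proof omits this case entirely.
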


\begin{proof}

Assume first that there is a solution $x = (x_1, \ldots, x_{n})$ to $A$.
We know that  there is another solution to $A$ such that all
non-infinity coordinates of the solution are non-negative and bounded by $Mn$.
We also know that there is at least one finite coordinate in the solution,
let it be the coordinate with the number $i$ and consider the matrix $A_i$.

Consider the following solution vector $(y_1, \ldots, y_{n-1}, z_{1}, \ldots, z_{n})$
of $A_{i}$. Let
$$
z_{j} = \begin{cases}
x_{j}, & x_j \neq \infty\\
x_{i} + \beta, & x_{j} = \infty\\
\end{cases}
$$
(in particular, $z_{i} = x_{i}$).
For each $j \in [n-1]$ let
$$
y_{j} = x_{i} + \beta.
$$
Note that for $j$ such that $x_{j} = \infty$ we have $y_{j} = z_{j}$.

It is clear that in each of the last $(n-1)$ rows we will have two minimums:
one in the zero entry of $C$ and the other in the $-\beta$-entry of $B$.
Consider now some row $l$ among the first $m$ rows. If the minimum for the solution $x$ of $A$ in this
row was finite, then it will be the same in $A_{i}$. Indeed, all finite values of $A + x$ in the row $l$
remain the same in $A_i + (y,z)$ and all other entries are at least $\beta - M$ which is much greater than any finite entry.
If, on the other hand, the minimum for the solution $x$ of $A$ in row $l$ was infinite than we have that for any column $j$
of $A$ either the entry $(l,j)$ is infinite, or the coordinate $x_{j}$ is infinite.
This means that in the $(A^{\infty \to \alpha} + z)$-part of the matrix $A_{i} + (y,z)$ all entries in row $l$ are
either at least $\alpha$, or approximately equal to $\beta$ (that is, differ from $\beta$ by at most $Mn$).
Note also that there is at least one entry which is approximately
equal to $\beta$: $A$ does not contain a row of infinities.
Now note that in the $(A^{\infty \to \alpha}_{i} + y)$-part of $A_{i} + (y,z)$ entries in the row $l$ are also either at least $\alpha$
or approximately equal to $\beta$. And finally note that for each $\beta$-entry in $A^{\infty \to \alpha}+z$ there is
an entry with the same value in $A^{\infty \to \alpha}_{i}+y$ (the one corresponding to the same column of $A$;
note that for $i$th column of $A^{\infty \to \alpha}$ the value in $l$ should be at least $\alpha$ since $x_{i}$ is finite).
So if we consider the smallest entry in the row $l$ it will be approximately equal to $\beta$ and we will have two minimums.

Assume now that there is a solution of $A_{i}$ for some $i$.
Assume first additionally that all minimums in the last $(n-1)$ rows are in the entries with values $0$ and $(-\beta)$.
Consider the smallest $z$-coordinate $j$ of the solution.
Let for simplicity $z_{j} = 0$. In particular, $z_j \leq z_i$ and thus $z_j$
at least by $\beta$ smaller than all $y$-coordinates of the solution (due to the last $n-1$ rows).

Next we construct the set of columns $J \subseteq [n]$ step by step. The plan is that
to get the solution of $A$ we will leave the coordinates from $J$ in $z$ as they are and will make all other coordinates infinite.
We start with $J = \{j\}$ (the smallest coordinate). Thus $J$ is already nonempty. During all the procedure we will have
that $z_{l}$ for all $l \in J$ is at most $Mn$.
To update $J$ we consider some column $l \in J$ and consider some small entry $(r, l)$ (at most $M$) of the matrix $A^{\infty \to \alpha}$ in this column.
For our solution $(y,z)$ there are minimums in the row $r$ of $A_i + (y,z)$ and the values in these minimums are at most the value in $(r,l)$.
Thus the values in these minimums are at most $a_{r,l} + z_{l}$ and thus they are in the $z$-part of the matrix.
We add the columns corresponding to these minimums to $J$. The process stops when we cannot increase the set $J$.
Since there are only $n$ columns under consideration the values of coordinates in $J$ cannot reach values greater than $Mn$.

Now we let $x_l = z_l$ if $l \in J$ and $x_l = \infty$ otherwise.
Let us prove that $x$ is a solution of $A$. Consider some row $r$.
If the minimum in $r$ was attained on the columns from $J$, it will still be attained there
(note that both minimums get to $J$).
Suppose now that the minimum was attained on the columns outside of $J$.
Then the corresponding entries of $x$ are infinities and we have to prove that
all entries of the row $r$ in the matrix $A + x$ are infinite.
Suppose that there is a finite entry in this row. Then at first its column is in $J$
and also the corresponding entry of $A^{\infty \to \alpha}$ is at most $M$.
But then we could update the set $J$ considering this column and this entry in the row $r$.
This means that the columns where the minimum in the row $r$ is attained should be in $J$ and we have a contradiction.
Thus all entries in the row $r$ in the matrix $A + x$ are infinities and we are done.

We have considered the special case in which all minimums in the last $(n-1)$ rows of $A_{i}$ are in $0$-entries and $(-\beta)$-entries.
Suppose now that we have a solution with the minimum in one of the last rows situated in $\gamma$-entry.
Then it should have a minimum in a $\gamma$-entry of $C$ (if there is no one in it, then the matrix consisting of $B$ and zero column has a
solution; there is only one solution for such matrix and this solution does not contain minimums in $\gamma$-entries).
Consider the smallest coordinate $j$ of $z$. Then there is also $\gamma$-minimum in the column corresponding to $z_j$
(note that $z_i$ cannot be the smallest one, it is at least by $\gamma$ greater than the one with the minimum).
Let us assume for simplicity that $z_{j} = 0$. Then we have that $z_i \geq \gamma$.
This in its turn means that for all $c \in [n-1]$ we have $y_{c} \geq \gamma + \beta$.
Indeed, all entries of $A_i + (y,z)$ in $C$ are at least $\gamma$.
If there is $y_c < \gamma + \beta$ consider the smallest such coordinate. The row with $-\beta$ in
the corresponding column has a single minimum.

Next we again construct the set $J$ along the same lines starting from $J = \{j\}$.
The construction is the same, but note that now not only the columns of $y$ cannot get into $J$,
but also the column corresponding to $z_i$ cannot get there.

Next again we let $x_l = z_l$ for $l \in J$ and we let $x_l = \infty$ otherwise.
Note that now $z_i = \infty$. By the same argument $x$ is a solution of $A$.

\end{proof}

It is easy to reduce the question about solvability of at least one of the several systems $A_{1}, \ldots, A_n$
to the question of solvability of a single system. Just consider large enough $\delta$ and the block matrix with the
the matrices $A_{i}$ on the diagonal and $\delta + A_{i}$ matrix everywhere outside the diagonal in the block-column $i$,
where $\delta + A_{i}$ means that we add $\delta$ to all entries of $A_{i}$.

Thus we have polynomial time $m$-reductions between the problems under consideration.

This proof can be generalized to min-plus linear systems also.
One direction is again easy.
For the other direction we apply the similar matrix construction,
but if the column $i$ we have chosen is from the left-hand side we add the ``copies'' of only left columns
and we add them to the right-hand side. Matrices $B$ and $C$ are similar.
The proof of the reduction follows the same lines.

\section{Direct proof of equivalence between $\tropsolv$ and $\tropimpl$ over $\bb{Z}$ and $\bb{Z}_{\infty}$}
\label{app.equivalence}

We start this section with a direct proof of the following theorem.

\begin{theorem} \label{thm.direct_equiv}
$\tropimpl \leq_{T} \tropsolv$.
\end{theorem}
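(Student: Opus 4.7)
The plan is to reduce $\tropimpl$ to $n$ instances of $\tropsolv$ via a polynomial-time Turing reduction. The key observation is that $l$ is implied by $A$ precisely when every solution $x$ of $A$ makes the minimum in the row $(l_j + x_j)_j$ attained at least twice; the negation is that there exists a solution $x$ of $A$ for which the minimum of $l + x$ is attained uniquely, at some column $j^* \in [n]$. I would split on the location $j$ of this unique minimum and, for each $j \in [n]$, construct a tropical linear system $T_j$ that is solvable iff there exists $x$ satisfying $A$ with $l_j + x_j + 1 \leq l_k + x_k$ for every $k \neq j$ (the integer version of the strict inequality). The reduction then answers ``not implied'' iff some query $\tropsolv(T_j)$ returns YES.

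To build $T_j$, I would augment the variable set with a single fresh variable $w_j$ and assemble three groups of rows, each group produced by an application of Lemma~\ref{lemma.stars_restriction}. First, for each row $(a_{i1}, \ldots, a_{in})$ of $A$, an application with $\vec{a} = (a_{i1}, \ldots, a_{in})$ at the original columns ensures that in any solution the minimum in this row is attained twice among $x_1, \ldots, x_n$, i.e., $x$ satisfies the original $A$-row. Second, an application with $\vec{a} = (l_j + 1, 0)$ at columns $(x_j, w_j)$ pins $w_j = l_j + 1 + x_j$ in every solution. Third, for each $k \neq j$, an application with $\vec{a} = (l_j + 1,\, l_k,\, 0)$ at columns $(x_j, x_k, w_j)$ forces the minimum among these three active positions to be attained at least twice. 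The total size is polynomial in that of $(A, l)$, and the threshold $C$ supplied to each application of the lemma can be taken as a common polynomial bound.

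Verification is then direct. Once the second group pins $w_j = l_j + 1 + x_j$, the three active positions of each row in the third group evaluate to $(l_j + 1 + x_j,\, l_k + x_k,\, l_j + 1 + x_j)$; the ``min attained twice'' requirement (guaranteed to live on these three columns by the lemma) therefore holds iff $l_k + x_k \geq l_j + 1 + x_j$. Hence $T_j$ is solvable precisely when there is an $x$ satisfying $A$ with the strict inequality $l_j + x_j < l_k + x_k$ for all $k \neq j$, which is exactly the unique-minimum-at-$j$ condition. Conjoining over $j$, $l$ is implied by $A$ iff no $T_j$ is solvable, giving the desired Turing reduction.

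The principal obstacle I anticipate is expressing a strict inequality via tropical equalities, whose native predicate is only ``the minimum is attained at least twice''. My workaround is to engineer an automatic tie through the pinned auxiliary $w_j$: the gadget row always contains two equal active entries $l_j + 1 + x_j$, so the ``attained twice'' condition reduces to the single dominance $l_k + x_k \geq l_j + 1 + x_j$. The analogous statement for $\tropimpl_{\infty}$ will follow by the same construction, replacing $+\infty$ with a sufficiently large constant in the spirit of Lemma~\ref{lemma.reduct_simple}.
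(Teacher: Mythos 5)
Your construction is correct, and it takes a genuinely different route from the paper. The paper reduces $\tropimpl$ to $O(n^2)$ solvability queries for systems $A_{ij}$ obtained by adjoining to $A$ a scaled staircase matrix $B_{ij}$ with entries that are multiples of $-1/(3Mn)$; its correctness argument is geometric, relying on connectivity of the prevariety, a decomposition into polytopes with integer vertices of bounded size, and a careful segment argument locating a point that satisfies $A_{ij}$ but not $l$. You instead observe that the negation of the implication splits over the position $j$ of the unique minimum of $l+x$, and that over $\bb{Z}$ uniqueness is the same as a gap of $1$, so each branch is expressible as a single tropical solvability question via the star-restriction gadget of Lemma~\ref{lemma.stars_restriction} used exactly as in Theorem~\ref{thm.MAP}: once $w_j$ is pinned to $l_j+1+x_j$, the forced tie between the two copies of $l_j+1+x_j$ converts ``min attained twice'' into the one-sided inequality $l_k+x_k \geq l_j+1+x_j$. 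This yields only $n$ queries, keeps all entries integral, and entirely avoids the polytope machinery; what the paper's longer route buys in exchange is the explicit polytope/vertex description of the prevariety (which it reuses, e.g., as an alternative proof of Lemma~\ref{lemma.size_bound}). The one point you leave unjustified is the choice of $C$: it is not enough to say it ``can be taken as a common polynomial bound,'' since a priori the witness $x$ could have huge coordinates. The fix is short but should be stated: normalize the entries of $A$ and $l$ to lie in $[0,M]$ and translate the witness so that $\min_p x_p = 0$ (both harmless); then every quantity that must dominate the $\geq C$ entries is a row-minimum of $A+x$ or the value $l_j+x_j=\min_k(l_k+x_k)$, each of which is at most $M+1$ after this normalization, so $C = M+2$ suffices --- large individual coordinates of $x$ only ever appear added to $C$ in the non-active part of the gadget rows, where they can only help. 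Your closing remark that the $\bb{Z}_{\infty}$ case follows ``by the same construction'' is too optimistic (the paper needs a separate analysis of the kernel of infinite coordinates there), but that is outside the statement under review.
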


\begin{proof}
Suppose we are given a tropical linear system $A$ of size $m \times n$ and
the tropical lineal equation $l$ with $n$ variables
and we want to check whether the system $A$ implies $l$.
We can assume that all entries of $A$ and $l$ are nonnegative and bounded by some $M$.

First check whether $A$ has a solution. If not, then $A$ implies $l$ and we are done.
If there is a solution to $A$ check whether there is a solution to the joint system
$A \cup \{l\}$. If there is no solution to this system we are also done: clearly $A$ does not imply $l$.

Thus from now on we can assume that there is a common solution to $A$ and $l$.
We also can assume without loss of generality that all the coefficients of $l$ are zeros.

For each pair $(i,j) \in [n] \times [n]$ such that $i \neq j$ consider the system $B_{ij} \in \zo^{n \times n}$
of the following form:
$$
%\left( \begin{BMAT}(e){ccccc}{ccccc}
B_{ij} = - \frac{1}{3Mn} \bordermatrix{
&   &   & &      & & j & i\cr
& 0 & 0 & 0&\cdots&0 & 0 & 0\cr
& 1 & 1 & 1&\cdots&1 & 1 & 0\cr
& 0 & 1 & 1&\cdots&1 & 1 & 0\cr
& 0 & 0 & 1&\cdots&1 & 1 & 0\cr
& \vdots & \vdots& \vdots& \ddots & \vdots & \vdots & \vdots\cr
& 0 & 0 & 0&\cdots&1 & 1 & 0\cr
& 0 & 0 & 0&\cdots&0 & 1 & 0\cr},
%\end{BMAT}
%\right).
$$
where we have rearranged columns to clarify the picture.
That is, lower left corner of $B_{ij}$ of size $(n-1) \times (n-1)$ is upper triangular.
%Or, to state it the other way, $B_{1k} = 0$ for all $k$, $B_{kn}=0$ for all $k$ and
%$B_{kl} = 0$ for $k \geq l+2$; all other entries of $B_{ij}$
Note that columns other than $i$ and $j$ can be enumerated in an arbitrary way.
This is not important so we choose some enumeration and consider some fixed matrix $B_{ij}$
for each pair $(i,j)$. Note also that the first row of each $B_{ij}$ is exactly equation $l$.
It is easy to see that each tropical linear system $B_{ij}$ has no solution.

Consider also tropical systems $A_{ij}$ for each pair $(i,j)$ consisting of all
equations of $A$ and of all equation of $B_{ij}$ except $l$ (that is, except the first one).

\begin{claim}
Equation $l$ follows from the system $A$ iff all systems $A_{ij}$ have no solutions.
\end{claim}
\begin{proof}[Proof of the claim]
First let us assume that there is a solution for some system $A_{ij}$.
Then this solution does not satisfy $l$ since $B_{ij}$ has no solutions.
Thus we have a solution of $A$ which does not satisfy $l$ and $A$ does not imply $l$.

Now suppose that there is a solution of $A$ which does not satisfy $l$.
Recall that the tropical prevariety of $A$ forms the connected set of polytopes in projective space.
We additionally can prove the following.
\begin{claim}
The tropical prevariety can be represented as a union of polytopes in projective space
such that the vertices of these polytopes are nonnegative integer vectors which coordinates are bounded by $Mn$.
\end{claim}
Note that this claim provides another proof of Lemma~\ref{lemma.size_bound}.
\begin{proof}
Consider arbitrary solution $(x_1, \ldots, x_{n})$
of the tropical linear system $A$ and consider the matrix $A_{x}$ (see Preliminaries for the definition).
For this matrix we consider the table $A_{x}^{\ast}$ defined in Section~\ref{sec.characterization}:
the table has the same size as $A_x$, has $\ast$ symbols in the row-minimum entries of $A_x$ and has no other symbols in it.
The fact that $x$ is a solution is equivalent to the fact that each row of $A_x^{\ast}$ has at least two stars in it.

Now we define our polytopes.
Consider arbitrary subset $P \subseteq [m]\times [n]$, such that for any $i \in [m]$ there are at least two different elements
$j,k \in [n]$ such that $(i,j), (i,k) \in P$, and consider the set $X_P$ of solutions $x$ of
the tropical linear system $A$ such that for any $(i,j) \in P$ the table $A_x^{\ast}$ contains
a star in the entry $(i,j)$ (thus we allow some additional stars in $A_x^{\ast}$, but we require that the entries from $P$
should contain stars).
It is easy to see that any solution lies in some set $X_P$.

Note that for each row $r$ the restriction that there are stars in certain entries
can be represented as a system of linear equalities and inequalities (in the projective space).
Thus any set $X_P$ is a polytope.
Thus we have constructed the system of polytopes union of which is equal to the tropical prevariety.

Now when we have defined the set of polytopes we are ready to prove
the bound on the coordinates of vertices.
For this with star-table we associate a star-graph: its vertices are $\ast$-symbols of $A_x^{\ast}$ and we draw an edge between two
nodes if the corresponding $\ast$-symbols are either in the same column, or in the same row of $A_x^{\ast}$.

We claim that if $x$ is a vertex of some polytope $X_P$ then the star graph corresponding to $x$
is connected. Indeed, suppose it is not connected. Then there are at least two connected components
and each component can be characterized by the set of columns containing it.
Consider one of the component. Note that there is small enough $\eps>0$ such that if we add the same number
between $-\eps$ and $\eps$ to each column of this component the star table will not change and thus
this interval lies in $X_P$. Thus we have an interval in $X_P$ containing $x$ and thus $x$ is not a vertex.

Now we are ready to prove the bound on the coordinates of the vertices.
Note that each two solution coordinates which have stars in the same row
differs by at most $M$. Since for the vertices the star graph is connected,
any pair of coordinates can be connected by a path in star graph of length at most $2n$ with
at most $n$ row-edges. Thus the difference between any two coordinates is at most $Mn$.
Since we work in the projective space we can assume that the smallest coordinate is zero.
Finally, since the entries of the matrix are integer, the coordinates of the vertices are also integer.
\end{proof}

In what follows we need two properties of the constructed polytopes:
first, that each polytope has a vertex (or equivalently, no polytope contains a line),
second, the intersection of two polytopes under consideration is also a polytope of the same form.

For the first property suppose one of the polytopes $X_P$ contains a line $x(t) = \vec{a} + t\vec{b}$.
Since we are in the projective space we can assume that all coordinates of $\vec{b}$
are nonnegative and there is at least one zero coordinate. Then if there is an element $(i,j) \in P$ such
that $b_j = 0$ then note that there is no star in $(i,j)$ entry of $A_{x(t)}^{\ast}$ for small enough negative $t$.
If on the other hand, there is an element $(i,j) \in P$ such that $b_{j} > 0$ then there is no star in $A_{x(t)}^{\ast}$
for large enough $t$.

For the second property note that $X_{P_1} \cap X_{P_{2}} = X_{P_1 \cup P_{2}}$.
The proof follows straightforwardly from the definition of $X_P$.

Now we have that the solution prevariety is a connected set of polytopes those vertices have integer coordinates
between $0$ and $Mn$.
A solution of $A$ satisfying $l$ can be chosen to be a vertex of a polytope of the system $A \cup \{l\}$ and thus
to have coordinates bounded by $Mn$.
Note also that the solution of $A$ not satisfying $l$ can be chosen to have nonnegative integer coordinates
of size at most $Mn+1$.
Indeed, this solution lies in some polytope $X_P$ corresponding to $A$.
If this polytope does not intersect the tropical prevariety of $l$,
then we can just choose the vertex of this polytope. If on the other hand the polytope intersects with some polytope
for $l$ we can consider a polytope of the system $A \cup \{l\}$ lying in $X_P$ with the maximal number of stars in the row $l$.
This polytope has some vertex $x$ and
there is an interval lying in $X_P$ with one end in $x$ and the other end not satisfying $l$. Since we are
in the projective space we can assume that the vector corresponding to this interval starting in $x$ has nonnegative coordinates
and some of the coordinates are zeros. This means that we can add some positive numbers to the coordinates of $x$
and still stay in $X_{P}$ and this means that no stars essential to $X_P$ disappears after this operation.
But on the other hand some stars disappear from the row $l$ and note that on this interval some of the stars are not presented
in all points except $x$. Indeed, otherwise all of them are among increased components and before they disappear
there is a point where some new star appears and this contradicts the choice of the polytope for $A \cup \{l\}$.
But then since the coordinates of $x$ are integers we can add $1$ to all the coordinates that increase on the interval and still remain in $X_{P}$,
but this new point will not be a solution of $l$.

Thus both solutions has bounded integer coordinates and moreover we can connect
them by a piecewise linear path in the projective space each node of which has bounded integer coordinates.
Indeed we can connect them by a path going through our polytopes and for the nodes of the path we can choose the vertices
of polytopes: if we need to pass from one polytope to another we can do it through the node of their intersection
(recall that it is also the polytope of the form $X_P$).
Finally note that we can choose one segment of the path on which the solution of $A$ stops to be a solution of $l$.
%Since we work in projective space the vertices of the polytopes can be chosen to be integers between $0$ and $Mn$.

Thus there is a line segment $\alpha$ lying in the tropical prevariety of $A$ and connecting two solutions of $A$
with integer coordinates, one satisfying $l$ and the other not satisfying $l$.
Moreover we can assume that only one point of $\alpha$ satisfies $l$.
The segment line $\alpha$ can be parameterized as $a + t b$ where $t$ ranges from $0$ to $1$,
$b$ is an integer vector and $a$ is the solution satisfying $l$. Both coordinates of $a$ and $b$ are bounded by $Mn$.
Let us denote $a = (a_{1}, a_{2}, \ldots, a_{n-1}, a_{n})$.
We can add the same integer to all coordinates of $a$ to make the smallest coordinate
of $a$ to be zero. Since all coordinates of $l$ are zeros we have that $a$ has at least two zero coordinates
and other coordinates of $a$ are nonnegative. Without loss of generality let $n$ be the smallest coordinate
of $b$ among all coordinates where $a$ is zero and let $n-1$ be the second smallest coordinate of $b$ where $a$ is zero.
Thus $a = (a_{1}, \ldots, a_{n-2}, 0, 0)$. We can also add a number to all coordinates of $b$ in such a way that $b_n = 0$.
All other coordinates of $b$ corresponding to the zero coordinates of $a$ are positive since $a$ is the unique solution of $l$
on the $\alpha$.
Now consider the system $A_{n,n-1}$ and consider the point of $\alpha$ corresponding to $t = 1/3Mnb_{n-1}$. Let us denote this point by $c$.
First, it is clear that $c$ satisfies $A$. Note also that the $n$-th coordinate of $c$ is zero, the $(n-1)$-th
coordinate of $c$ is $1/3Mn$ and all other coordinates of $c$ are at least $\min\{1/3Mn, 1 - 1/3\}$ (the first value corresponds to coordinates with $a_i=0$ and the second corresponds to nonzero coordinates of $a$). Thus we have that $c$ also satisfies all equations of $B_{n,n-1}$ except $l$
and thus $c$ satisfies $A_{n,n-1}$.
\end{proof}

By the claim above to check whether $A$ implies $l$ it is enough to check whether $O(n^2)$ tropical linear systems
have solutions
from which the theorem follows (note that formally the systems has rational entries, but we can multiply
them by $3Mn$).

\end{proof}

Next we prove the following theorem.
\begin{theorem}
$\tropimpl_{\infty} \leq_{T} \tropsolv_{\infty}$.
\end{theorem}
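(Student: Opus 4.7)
Given $A \in \bb{Z}_\infty^{m \times n}$ and $l \in \bb{Z}_\infty^n$, the plan is to mimic the $O(n^2)$-query Turing reduction from Theorem~\ref{thm.direct_equiv} with the $\tropsolv_\infty$ oracle in place of the $\tropsolv$ oracle. First I would use two oracle calls to decide solvability of $A$ and of $A \cup \{l\}$ over $\bb{Z}_\infty$: if $A$ is unsolvable then $A$ vacuously implies $l$, and if $A$ is solvable but $A \cup \{l\}$ is not then some $\bb{Z}_\infty$-solution of $A$ fails $l$ and $A$ does not imply $l$. In the remaining case both systems are solvable over $\bb{Z}_\infty$ and I pass to the main argument.

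For this main case I would reuse the same matrices $B_{ij}$ from the proof of Theorem~\ref{thm.direct_equiv}, whose first row is the scaled equation $l$ and whose lower-left $(n-1) \times (n-1)$ block is upper-triangular, and I would form $A_{ij}$ from $A$ by adjoining all rows of $B_{ij}$ except the first. For each ordered pair $(i,j)$ with $i \neq j$ in $[n]$ I query $\tropsolv_\infty(A_{ij})$ and output ``$A$ implies $l$'' iff all these queries return unsolvable. Correctness reduces to two claims: (i)~each $B_{ij}$ is unsolvable even over $\bb{Z}_\infty$, so that every $\bb{Z}_\infty$-solution of $A_{ij}$ is automatically a $\bb{Z}_\infty$-solution of $A$ failing $l$; and (ii)~if some $\bb{Z}_\infty$-solution of $A$ fails $l$ then some $A_{ij}$ is $\bb{Z}_\infty$-solvable.

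Claim~(i) follows from a row-by-row analysis almost identical to the $\bb{Z}$ case. Let $x$ be a putative $\bb{Z}_\infty$-solution of $B_{ij}$, let $F$ be the set of its finite coordinates, $\alpha = \min_{k \in F} x_k$, and $F_\alpha = \{k \in F : x_k = \alpha\}$. The first row forces $|F_\alpha| \geq 2$. Row~$2$ then forces $n \notin F_\alpha$ (otherwise $x_n = \alpha$ is the unique minimum of that row, since $x_k + 1 \geq \alpha + 1$ for all finite $k$ and infinite entries do not contribute), and rows $3, 4, \ldots, n$ successively force $1, 2, \ldots, n-2 \notin F_\alpha$ by the same argument: whenever position $r-2$ would lie in $F_\alpha$, it would be the unique minimum of row~$r$. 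This forces $F_\alpha \subseteq \{n-1\}$, contradicting $|F_\alpha| \geq 2$.

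The main obstacle is claim~(ii), the analogue of the polytopal-path argument of Theorem~\ref{thm.direct_equiv} in the presence of infinite coordinates. The $\bb{Z}_\infty$-prevariety of $A$ stratifies by the set $S \subseteq [n]$ of infinite coordinates, each stratum being a polytopal complex in the remaining finite coordinates with integer vertices bounded by a polynomial in $m$, $n$ and the largest finite entry (the star-graph argument of Lemma~\ref{lemma.size_bound} applies stratum-wise). By connectedness of the full prevariety I connect a solution $a$ of $A \cup \{l\}$ to a solution $x_0$ of $A$ not satisfying $l$ by a piecewise-linear path with nodes at polytope vertices, and I isolate a single segment, lying entirely inside one stratum $S$, on which $l$-satisfaction flips. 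Restricting $A$ and $l$ to the columns outside $S$ and removing rows of $A$ supported only on $S$ gives a $\bb{Z}$-instance on $[n] \setminus S$ in which the segment witnesses failure of implication; the proof of Theorem~\ref{thm.direct_equiv} then produces a finite solution of the restricted $A_{ij}$ for a pair $(i,j) \subseteq [n] \setminus S$, and lifting this solution to $\bb{Z}_\infty^n$ by setting every coordinate in $S$ to $\infty$ yields a $\bb{Z}_\infty$-solution of the original $A_{ij}$, since the restricted $B_{ij}$-rows are satisfied at the lift and the discarded $A$-rows were already trivially satisfied at $x_0$.
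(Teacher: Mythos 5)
There is a genuine gap, and it is exactly the point where the paper's proof does its real work. Your claim~(ii) rests on finding a single segment, \emph{inside one stratum} (one fixed set $S$ of infinite coordinates), on which satisfaction of $l$ flips. But nothing guarantees such a stratum exists: it can happen that every $\bb{Z}_{\infty}$-solution of $A\cup\{l\}$ has a strictly larger set of infinite coordinates than some solution of $A$ violating $l$, so that no stratum contains solutions of both kinds and no flip occurs inside any stratum. (Connectedness of the prevariety does not give you a piecewise-linear path crossing between strata --- points with different infinity patterns do not even live in a common affine space.) This is precisely the case the paper isolates via the notion of the \emph{kernel}: the minimal (under inclusion) set of coordinates that are infinite in every solution, which exists because the solution set is closed under coordinatewise $\min$. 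The paper shows that if the kernels of $A$ and $A\cup\{l\}$ differ then the implication automatically fails, and only when they coincide does it rerun the segment argument of Theorem~\ref{thm.direct_equiv} on the complement of the kernel. Crucially, deciding whether the kernels coincide is itself nontrivial and consumes additional $\tropsolv_{\infty}$ queries: the paper characterizes kernel disagreement by whether $A\cup\{l\}$ forces all finite coordinates of $l$ to be infinite while $A$ does not, and tests this with two auxiliary constructions (one built from Lemma~\ref{lemma.stars_restriction} to force the minimum of row $l$ to be attained in its finite part, and one bordered matrix to test whether $A$ admits a solution with a prescribed finite coordinate). Your reduction makes no queries of this kind, so in the bad case it would see all $A_{ij}$ unsolvable and wrongly report that $A$ implies $l$.

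A secondary but related defect: over $\bb{Z}_{\infty}$ the equation $l$ may have infinite coefficients, so you cannot normalize its coefficients to zero, and the matrices $B_{ij}$ as you use them (first row all zeros, all entries finite, triangular block below) are not defined; your row-by-row unsolvability argument for claim~(i) silently assumes $l$ is entirely finite. The paper avoids this because, once the kernels are known to agree, the argument is run only on kernel solutions restricted to the non-kernel coordinates, where everything relevant is finite. To repair your proof you would need to add the kernel-comparison subroutine (or something equivalent) before invoking the $A_{ij}$ queries.
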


The proof remains almost the same as before but
to get the interval $\alpha$ we need two ends of it
to have infinities in the same coordinates. For this we note that since
the set of solutions of the system is closed under $\min$ operation,
there is a solution with the minimal (with respect to inclusion) set of infinite coordinates.
We call the set of coordinates infinite for all solutions by kernel.
Solutions having infinities only in kernel coordinates we call kernel solutions.
If the kernels of $A$ and $A \cup \{l\}$ are different, then the implication is not true.
If, on the other hand, the kernels are equal we can proceed as before.
Indeed, if there is a solution $x$ of $A$ which is not a solution of $l$,
then there is a kernel solution of $A$ which is not a solution of $l$.
Just consider some kernel solution $y$, add large enough constant to all its
coordinates and take a minimum with $x$. With this new solution in hand
we can repeat the argument of Theorem~\ref{thm.direct_equiv}.

Thus we have to check that the
kernel of infinities is the same for $A$ and for $A \cup \{l\}$ (if they are different, then $l$ does not follow from $A$).

For this let us first prove the following claim.

\begin{claim}
The kernels are different for $A$ and $A \cup \{l\}$ iff the kernel of $A \cup \{l\}$ contains
all finite coordinates of $l$ and the kernel of $A$ does not.
\end{claim}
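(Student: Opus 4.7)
Write $K$ and $K'$ for the kernels of $A$ and $A \cup \{l\}$, and $F = \{k : l_k \neq \infty\}$ for the set of finite coordinates of $l$; note that $K \subseteq K'$ automatically, since $S_{A \cup \{l\}} \subseteq S_A$. The backward direction of the claim is then immediate: if $F \subseteq K'$ and $F \not\subseteq K$, any $j \in F \setminus K$ lies in $K' \setminus K$, so the kernels differ. For the forward direction I assume $K \subsetneq K'$ and establish the two conjuncts $F \not\subseteq K$ and $F \subseteq K'$ separately.

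The first conjunct, $F \not\subseteq K$, is handled by contrapositive. Suppose $F \subseteq K$; then for every solution $x$ of $A$ and every index $k$, the term $l_k + x_k$ is $+\infty$ (either because $l_k = \infty$ when $k \notin F$, or because $k \in F \subseteq K$ forces $x_k = \infty$). Hence $\min_k(l_k + x_k) = +\infty$, vacuously attained at every index, so every solution of $A$ already satisfies $l$. Therefore $S_A = S_{A \cup \{l\}}$ and $K = K'$, contradicting the assumption.

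The nontrivial conjunct is $F \subseteq K'$. Suppose for contradiction that some $j \in F \setminus K'$ exists. Using that the solution sets of both $A$ and $A \cup \{l\}$ are closed under componentwise $\min$ (as noted in the discussion preceding the claim), pick a kernel solution $y$ of $A$ (with infinity set exactly $K$) and a kernel solution $x^{*}$ of $A \cup \{l\}$ (with infinity set exactly $K'$). For a large integer $C > 0$, put
\[
z = \min(y,\; x^{*} - C \vec{1}).
\]
A componentwise check shows that the infinity set of $z$ equals $K \cap K' = K$, and $z \in S_A$ as the componentwise $\min$ of two solutions of $A$. If $z$ also satisfies $l$, then $z \in S_{A \cup \{l\}}$ and its infinity set contains $K'$, contradicting the equality with $K \subsetneq K'$.

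The main obstacle is thus verifying that $z$ satisfies $l$. The key observation is that $x^{*}$ satisfies $l$, and $l_k + x^{*}_k$ is finite precisely on $F \setminus K'$. Since $j \in F \setminus K'$, this set is nonempty, so $m := \min_k(l_k + x^{*}_k)$ is finite and attained at least twice, at some indices $k_1, k_2 \in F \setminus K'$. At each such $k_i$ we have $z_{k_i} = x^{*}_{k_i} - C$, contributing $m - C$ to $l_{k_i} + z_{k_i}$. Every other contribution $l_k + z_k$ is either $+\infty$, or equals $l_k + y_k$ for $k \in K' \setminus K$ (a fixed value independent of $C$), or is at least $m - C$ for the remaining $k \in F \setminus K'$. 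Choosing $C$ large enough makes $m - C$ the overall minimum of $l_k + z_k$, attained at $k_1$ and $k_2$; hence $z$ satisfies $l$, delivering the contradiction.
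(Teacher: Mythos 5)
Your proof is correct and follows essentially the same route as the paper's: both hinge on taking the componentwise minimum of a kernel solution of $A$ and a kernel solution of $A \cup \{l\}$ after a large constant shift, and analyzing where the minimum of $l$ applied to the resulting vector is attained. You merely reorganize the logic --- handling $F \not\subseteq K$ by a separate vacuity argument and running the main construction in the contrapositive --- and in doing so you make explicit the case analysis the paper leaves implicit.
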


\begin{proof}
Assume that the kernels of $A$ and $A \cup \{l\}$ are different.
Then the kernel of $A$ is strictly included in the kernel of $A \cup \{l\}$.
Consider some solution $x$ of $A$ which has infinities only in the kernel of $A$,
and consider some solution $y$ of $A \cup \{l\}$ which has infinities only in the kernel of $A \cup \{l\}$.
Then add sufficiently large number $C$ to $x$ and consider $z = \min\{x+C, y\}$.
If $C$ is large enough this vector differs from $y$ only in the coordinates belonging to the kernel of $A \cup \{l\}$
and not belonging to the kernel of $A$: in these coordinate $y$ is infinite and $z$ is very large but finite.
Note that $z$ is a solution of $A$, but is not a solution of $l$. Thus, after substituting coordinates
in the difference of kernels by arbitrary large numbers $y$ becomes not a solution of $l$. This
can only happen if the minimum in $l + y$ is infinite and the symmetric difference of the kernels
contain some finite coordinate of $l$.

The other direction is obvious.
\end{proof}

Thus, to check whether kernels of $A$ and $A \cup \{l\}$ are different it is enough to check whether
$A \cup \{l\}$ has a solution with a finite minimum in the row $l$ and whether $A$ has a solution
such that it has a finite coordinate among the finite coordinates of $l$.
The kernels are different iff the answer to the first question is `no' and the answer to the second question is `yes'.

\paragraph{Checking the solutions of $A \cup \{l\}$.}
First we show how to answer the first question.
Let us consider the matrix $A \cup \{l\}$.
Without loss of generality let the row $l$ have the form
$$
l = (\vec{c}, \infty, \ldots, \infty),
$$
where the coordinates of $\vec{c}$ are finite.
We apply Lemma~\ref{lemma.stars_restriction} to vector $\vec{c}$
and $C = 10Mn$. From this we get the system $D$ and consider the system $B = A \cup D$.

If $A \cup \{l\}$ has a solution with a finite minimum in $l$,
then there is such a solution with absolute value of coordinates bounded by $Mn$ (see~\cite{grigoriev10system}),
and this is also a solution of $B$.

On the other hand, any solution of $B$ has minimum in rows of $D$ only in $\vec{c}$-part
and thus is also a solution of $A \cup \{l\}$ and has a finite minimum in $l$.

Thus we have proven that $A \cup \{l\}$ has a solution with a finite minimum in $l$ iff
$B$ has a solution.

\paragraph{Checking the solutions of $A$.}
It remains to check whether $A$ has a solution such that it has finite coordinate among the
finite coordinates of $l$.

We will check for each finite coordinate of $l$ whether $A$ has a solution with the corresponding finite coordinate.
Consider some finite coordinate of $l$, without loss of generality assume that this is the first coordinate.

Consider the matrix
$$
B = \left(
      \begin{array}{c|cccc}
        \infty &   &        &        & \\
        \vdots &   &        &  A     & \\
        \infty &   &        &        & \\\hline
        0      & 0 & \infty & \ldots & \infty \\
      \end{array}
    \right).
$$
It is clear that $A$ has a solution with a finite first coordinate iff
$B$ has a solution with a finite minimum in the last row.
Now to find out whether $B$ has a solution with a finite minimum in the last row we apply
the argument of the previous paragraph.

\paragraph{Extension to min-plus linear systems.}
The argument of this section can be also extended to min-plus linear systems.

For the reduction over $\bb{Z}$ we define matrices $B_{ij}$ in the same way,
but now consider them only for columns on different sides of the system.
The proof follows the same lines.

For the kernel part of the proof everything remains the same except for the last argument.
There we have to specify to which part of the matrix we add the new column.
But it is easy to see that the proof works if we add the new column (the first column of $B$ above)
to the opposite side of the column we consider (the second column of $B$ above).

\end{document}